\definecolor{myurlcolor}{rgb}{0,0,0.5}
\newcommand{\bref}[1]{(\ref{#1})}
\newcommand{\such}{:}
\newcommand{\epsln}{\varepsilon}
\newcommand{\R}{\mathbb{R}}
\newcommand{\demph}[1]{\textbf{#1}}
\newcommand{\sub}{\subseteq}
\newcommand{\cell}[4]{\put(#1,#2){\makebox(0,0)[#3]{#4}}}
\newcommand{\mg}[1]{\left| #1 \right|}
\newcommand{\transp}{T}
\DeclareMathOperator{\supp}{supp}
\newcommand{\vecstyle}[1]{\mathbf{#1}}
\newcommand{\mxstyle}[1]{\mathbf{#1}}
\newcommand{\p}{\vecstyle{p}}
\newcommand{\rvec}{\vecstyle{r}}
\newcommand{\e}{\vecstyle{e}}
\newcommand{\s}{\vecstyle{s}}
\newcommand{\w}{\vecstyle{w}}
\newcommand{\x}{\vecstyle{x}}
\newcommand{\Z}{\mxstyle{Z}}
\newcommand{\I}{\mxstyle{I}}
\newcommand{\M}{\mxstyle{M}}
\newcommand{\pext}[1]{\p(#1)} 
\newcommand{\primvv}[3]{{}^{#1}\!{#3}^{#2}}
\newcommand{\dvv}[2]{\primvv{#1}{#2}{D}}
\newcommand{\dqz}{\dvv{q}{\Z}}
\newcommand{\pmax}{\p_{\text{max}}}
\newcommand{\Dmax}[1]{D_\textup{max}(#1)}
\newcommand{\pcolor}{\color[rgb]{0,0,0.5}} 
\newcommand{\pmaxcolor}{\color[rgb]{0,0.5,0}} 
\newcommand{\pprimecolor}{\color[rgb]{0.75,0,0}} 
\newcommand{\adjt}{\mathrel{\sim}}
\newcommand{\adjto}[2]{#1 \adjt #2}
\newcommand{\adjtoin}[3]{#1 \adjt #2 \text{ in } #3}
\newcommand{\gcomp}[1]{\overline{#1}}
\newcommand{\cliqueno}[1]{\omega(#1)}
\newcommand{\csuch}{\colon}
\newcommand{\gedge}{\!\!-\!\!}
\newcommand{\rre}[3]{I_{#1}(#2\mathbin{|}#3)}
\newtheorem{thm}{Theorem}[section]
\newtheorem{propn}[thm]{Proposition}
\newtheorem{lemma}[thm]{Lemma}
\newtheorem{cor}[thm]{Corollary}
\newtheorem{defn}[thm]{Definition}
\newtheorem{example}[thm]{Example}
\newtheorem{remark}[thm]{Remark}
\theoremstyle{nonumberplain}
\newtheorem{proof}{Proof}
\title{Maximizing diversity in biology and beyond}
\author{Tom Leinster%
\thanks{School of Mathematics, University of Edinburgh, UK, and Boyd Orr
  Centre for Population and Ecosystem Health, University of Glasgow, UK;
  Tom.Leinster@ed.ac.uk}
\qquad
Mark W. Meckes%
\thanks{Department of Mathematics, Applied Mathematics, and Statistics,
  Case Western Reserve University, Cleveland, Ohio,
USA;
  Mark.Meckes@case.edu}}
\date{}
\begin{document}

\sloppy
\maketitle

\begin{abstract}
Entropy, under a variety of names, has long been used as a measure of
diversity in ecology, as well as in genetics, economics and other fields.
There is a spectrum of viewpoints on diversity, indexed by a real parameter
$q$ giving greater or lesser importance to rare species.  Leinster and
Cobbold~\cite{MDISS} proposed a one-parameter family of diversity measures
taking into account both this variation and the varying similarities
between species.  Because of this latter feature, diversity is not
maximized by the uniform distribution on species.  So it is natural to ask:
which distributions maximize diversity, and what is its maximum value?%
\\%
\hspace*{\parindent}%
In principle, both answers depend on $q$, but our main theorem is that
neither does.  Thus, there is a single distribution that maximizes
diversity from all viewpoints simultaneously, and any list of species has
an unambiguous maximum diversity value.  Furthermore, the maximizing
distribution(s) can be computed in finite time, and any distribution
maximizing diversity from some particular viewpoint $q > 0$ actually
maximizes diversity for all $q$.%
\\%
\hspace*{\parindent}%
Although we phrase our results in ecological terms, they apply very
widely, with applications in graph theory and metric geometry.
\end{abstract}

\vspace*{-1ex}%
\hspace*{.035\textwidth}\parbox{.85\textwidth}{\tableofcontents}%
\vspace*{1ex}

\section{Introduction}
\label{sec:intro}

For decades, ecologists have used entropy-like quantities as measures of
biological diversity.  The basic premise is that given a biological
community or ecosystem containing $n$ species in proportions $p_1,
\ldots, p_n$, the entropy of the probability distribution $(p_i)$ indicates
the extent to which the community is balanced or `diverse'.  Shannon
entropy itself is often used; so too are many variants, as we shall see.
But almost all of them share the property that for a fixed number $n$ of
species, the entropy is maximized by the uniform distribution $p_i = 1/n$.

However, there is a growing appreciation that this crude model of a
biological community is too far from reality, in that it takes no notice of
the varying similarities between species.  For instance, we would
intuitively judge a meadow to be more diverse if it consisted of ten
dramatically different plant species than if it consisted of ten species of
grass.  This has led to the introduction of measures that do take into
account inter-species similarities~\cite{RaoDDC,MDISS}.  In mathematical
terms, making this refinement essentially means extending the classical
notion of entropy from probability distributions on a finite set to
probability distributions on a finite metric space.

The maximum entropy problem now becomes more interesting.  Consider, for
instance, a pond community consisting of two very similar species of frog
and one species of newt.  We would not expect the maximum entropy (or
diversity) to be achieved by the uniform distribution $(1/3, 1/3, 1/3)$,
since the community would then be $2/3$ frog and only $1/3$ newt.  We might
expect the maximizing distribution to be closer to $(1/4, 1/4, 1/2)$;
the exact answer should depend on the degrees of similarity of the species
involved.  We return to this scenario in Example~\ref{eg:three}.

For the sake of concreteness, this paper is written in terms of an
ecological scenario: a community of organisms classified into species.
However, nothing that we do is intrinsically ecological, or indeed
connected to any specific branch of science.  Our results apply equally to any
collection of objects classified into types.

It is well understood that Shannon entropy is just one point (albeit a
special one) on a continuous spectrum of entropies, indexed by a parameter
$q \in [0, \infty]$.  This spectrum has been presented in at least two
ways: as the R\'enyi entropies $H_q$~\cite{Reny} and as the so-called
Tsallis entropies $S_q$ (actually introduced as biodiversity measures by
Patil and Taillie prior to Tsallis's work in physics, and earlier still in
information theory \cite{Tsal,PaTaDCM,HaCh}):
\[
H_q(\p)
=
\frac{1}{1 - q} \log \sum_{i = 1}^n p_i^q,
\qquad
S_q(\p)
=
\frac{1}{q - 1} \biggl( 1 - \sum_{i = 1}^n p_i^q \biggr).
\]
Both $H_q$ and $S_q$ converge to Shannon entropy as $q \to 1$.  Moreover,
$H_q$ and $S_q$ can be obtained from one another by an increasing
invertible transformation, and in this sense are interchangeable.

When $H_q$ or $S_q$ is used as a diversity measure, $q$ controls the weight
attached to rare species, with $q = 0$ giving as much importance to rare
species as common ones, and the limiting case $q = \infty$ reflecting only
the prevalence of the most common species.  Different values of $q$ produce
genuinely different judgements on which of two distributions is the more
diverse.  For instance, if over time a community loses some species but
becomes more balanced, then the R\'enyi and Tsallis entropies decrease for
$q = 0$ but increase for $q = \infty$.  Varying $q$ therefore allows us to
incorporate a spectrum of viewpoints on the meaning of the word
`diversity'.

Here we use the diversity measures introduced by Leinster and
Cobbold~\cite{MDISS}, which both (i)~reflect this spectrum of viewpoints by
including the variable parameter $q$, and (ii)~take into account the
varying similarities between species.  We review these measures in Sections
\ref{sec:spectrum}--\ref{sec:measures}.  In the extreme case where
different species are assumed to have nothing whatsoever in common, they
reduce to the exponentials of the R\'enyi entropies, and in other special
cases they reduce to other diversity measures used by ecologists.  In
practical terms, the measures of~\cite{MDISS} have been used to assess a
variety of ecological systems, from communities of
microbes~\cite{VPDL,BCMV} and crustacean zooplankton~\cite{JTYP} to alpine
plants~\cite{CMLT} and arctic predators~\cite{BRBT}, as well as being
applied in non-biological contexts such as computer network
security~\cite{WZJS}.

Mathematically, the set-up is as follows.  A biological community is
modelled as a probability distribution $\p = (p_1, \ldots, p_n)$
(representing the proportions of species) together with an $n \times n$
matrix $\Z$ (whose $(i, j)$-entry represents the similarity between species
$i$ and $j$).  From this data, a formula gives a real number $\dqz(\p)$ for
each $q \in [0, \infty]$, called the `diversity of order $q$' of the
community.  As for the R\'enyi entropies, different values of $q$ make
different judgements: for instance, it may be that for two distributions
$\p$ and $\p'$,
\[
\dvv{1}{\Z}(\p) < \dvv{1}{\Z}(\p')
\qquad 
\text{but} 
\qquad
\dvv{2}{\Z}(\p) > \dvv{2}{\Z}(\p').
\]

Now consider the maximum diversity problem.  Fix a list of species whose
similarities to one another are known; that is, fix a matrix $\Z$ (subject
to hypotheses to be discussed).  The two basic questions are:
\begin{itemize}
\item 
Which distribution(s) $\p$ maximize the diversity $\dqz(\p)$ of order $q$?

\item
What is the value of the maximum diversity $\sup_\p \dqz(\p)$?
\end{itemize}
This can be interpreted ecologically as follows: if we have a fixed list of
species and complete control over their abundances within our community,
how should we choose those abundances in order to maximize the diversity,
and how large can we make that diversity?  

In principle, both answers depend on $q$.  After all, we have seen that if
distributions are ranked by diversity then the ranking varies according to
the value of $q$ chosen.  But our main theorem is that, in fact, both
answers are \emph{independent} of $q$:

\begin{trivlist}\item%
\textbf{Theorem~\ref{thm:main} (Main theorem)}\hspace*{1em}\itshape There
exists a probability distribution on $\{1, \ldots, n\}$ that maximizes
$\dqz$ for all $q \in [0, \infty]$.  Moreover, the maximum diversity
$\sup_\p \dqz(\p)$ is independent of $q \in [0, \infty]$.
\end{trivlist}

So, there is a `best of all possible worlds': a distribution
that maximizes diversity no matter what viewpoint one takes on the relative
importance of rare and common species.

This theorem merely asserts the \emph{existence} of a maximizing
distribution.  However, a second theorem describes how to \emph{compute}
all maximizing distributions, and the maximum diversity, in a finite number
of steps (Theorem~\ref{thm:comp}).  

Better still, if by some means we have found a distribution $\p$ that
maximizes the diversity of \emph{some} order $q > 0$, then a further result
asserts that $\p$ maximizes diversity of \emph{all} orders
(Corollary~\ref{cor:irrelevance}).  For instance, it is often easiest to
find a maximizing distribution for diversity of order $\infty$ (as in
Example~\ref{eg:naive} and Proposition~\ref{propn:graph-main}), and it is
then automatic that this distribution maximizes diversity of all orders.

Let us put these results into context.  First, they belong to the huge body
of work on maximum entropy problems.  For example, the normal distribution
has the maximum entropy among all probability distributions on $\R$ with a
given mean and variance, a property which is intimately connected with its
appearance in the central limit theorem.  This alone would be enough
motivation to seek maximum entropy distributions in other settings (such as
the one at hand), quite apart from the importance of maximum entropy in
thermodynamics, machine learning, macroecology, and so on.

Second, we will see that maximum diversity is very closely related to the
emerging invariant known as magnitude.  This is defined in the extremely
wide generality of enriched category theory \cite[Section~1]{MMS}, and
specializes in interesting ways in a variety of mathematical fields.  For
instance, it automatically produces a notion of the Euler characteristic of
an (ordinary) category, closely related to topological Euler
characteristic~\cite{ECC}; in the context of metric spaces, magnitude
encodes geometric information such as volume and dimension
\cite{BaCa,MeckMDC,WillMSS}; in graph theory, magnitude is a new invariant
that turns out to be related to a graded homology theory for
graphs~\cite{MG,HeWi}; and in algebra, magnitude produces an invariant of
associative algebras that can be understood as a homological Euler
characteristic~\cite{MFDA}.

This work is self-contained.  To that end, we begin by explaining and
defining the diversity measures in~\cite{MDISS} (Sections
\ref{sec:spectrum}--\ref{sec:measures}).  Then come the results: 
preparatory lemmas in Section~\ref{sec:invt}, and the main results in
Sections~\ref{sec:main} and~\ref{sec:comp}.  Examples are given in Sections
\ref{sec:examples-simple}--\ref{sec:examples-pos-def}, including results on
special cases such as when the similarity matrix $\Z$ is either the
adjacency matrix of a graph or positive definite.  Perhaps
counterintuitively, a distribution that maximizes diversity can eliminate
some species entirely.  This is addressed in Section~\ref{sec:elim}, where
we derive necessary and sufficient conditions on $\Z$ for maximization to
preserve all species.  Finally, we state some open questions
(Section~\ref{sec:questions}).

The main results of this paper previously appeared in a preprint of
Leinster~\cite{METAMB}, but the proofs we present here are substantially
simpler.  Of the new results, Lemma~\ref{lemma:elim-main} (the key to
our results on preservation of species by maximizing distributions) borrows
heavily from an argument of Fremlin and Talagrand~\cite{FrTa}.

\paragraph{Conventions} 
\label{p:conventions}
A vector $\x = (x_1, \ldots, x_n) \in \R^n$ is \demph{nonnegative} if $x_i
\geq 0$ for all $i$, and \demph{positive} if $x_i > 0$ for all
$i$.  The \demph{support} of $\x \in \R^n$ is
\[
\supp(\x) 
= 
\bigl\{ i \in \{1, \ldots, n\} \such x_i \neq 0 \bigr\},
\]
and $\x$ has \demph{full support} if $\supp(\x) = \{1, \ldots, n\}$.  
A real symmetric $n \times n$ matrix $\Z$ is \demph{positive semidefinite}
if $\x^\transp \Z \x \geq 0$ for all $\vecstyle{0} \neq \x \in \R^n$, and
\demph{positive definite} if this inequality is strict.

\section{A spectrum of viewpoints on diversity}
\label{sec:spectrum}

Ecologists began to propose quantitative definitions of biological
diversity in the mid-twentieth century \cite{SimpMD,WhitVSM}, setting in
motion more than fifty years of heated debate, dozens of further proposed
diversity measures, hundreds of scholarly papers, at least one book devoted
to the subject~\cite{Magu}, and consequently, for some, despair (already
expressed by 1971 in a famously-titled paper of Hurlbert~\cite{Hurl}).
Meanwhile, parallel discussions were taking place in disciplines such as
genetics~\cite{KiCr}, economists were using the same formulas to measure
wealth inequality and industrial concentration~\cite{HaKa}, and information
theorists were developing the mathematical theory of such quantities under
the name of entropy rather than diversity.

Obtaining accurate data about an ecosystem is beset with practical and
statistical problems, but that is not the reason for the prolonged debate.
Even assuming that complete information is available, there are genuine
differences of opinion about what the word `diversity' should mean.  We
focus here on one particular axis of disagreement, illustrated by the
examples in Figure~\ref{fig:birds}.

\begin{figure}
\begin{center}
\setlength{\unitlength}{1mm}
\begin{picture}(120,38)
\cell{30}{0}{b}{\includegraphics[width=40\unitlength]{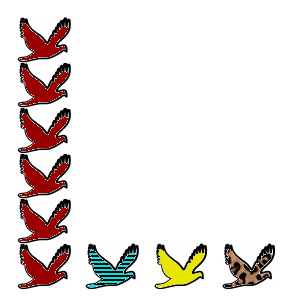}}
\cell{30}{-1}{t}{(a)}
\cell{95}{0}{b}{\includegraphics[width=30\unitlength]{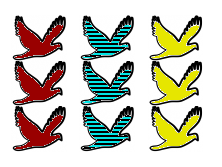}}
\cell{95}{-1}{t}{(b)}
\end{picture}
\end{center}
\caption{Two bird communities.  Heights of stacks indicate species
  abundances.  In~(a), there are four species, with the first dominant and
  the others relatively rare.  In~(b), the fourth species is absent but the
  community is otherwise evenly balanced.}
\label{fig:birds}
\end{figure}

One extreme viewpoint on diversity is that preservation of species is all
that matters: `biodiversity' simply means the number of species present (as
is common in the ecological literature as well as the media).  Since no
attention is paid to the abundances of the species present, rare species
count for exactly as much as common species.  From this viewpoint,
community~(a) of Figure~\ref{fig:birds} is more diverse than community~(b),
simply because it contains more species.

The opposite extreme is to ignore rare species altogether and consider only
those that are most common.  (This might be motivated by a focus on
overall ecosystem function.)  From this viewpoint, community~(b) is more
diverse than community~(a), because it is better-balanced: (a)~is dominated
by a single common species, whereas (b)~has three common species in equal
proportions.

Between these two extremes, there is a spectrum of intermediate viewpoints,
attaching more or less weight to rare species.  Different scientists have
found it appropriate to adopt different positions on this spectrum for
different purposes, as the literature amply attests.

Rather than attempting to impose one particular viewpoint, we will consider
all equally.  Thus, we use a one-parameter family of diversity measures,
with the `viewpoint parameter' $q \in [0, \infty]$ controlling one's
position on the spectrum.  Taking $q = 0$ will give rare species as much
importance as common species, while taking $q = \infty$ will give rare
species no importance at all.

There is one important dimension missing from the discussion so far.  We
will consider not only the varying abundances of the species, but also the
varying similarities between them.  This is addressed in the next section.

\section{Distributions on a set with similarities}
\label{sec:model}

In this section and the next, we give a brief introduction to the diversity
measures of Leinster and Cobbold~\cite{MDISS}.  We have two tasks.  We must
build a mathematical model of the notion of `biological community' (this
section).  Then, we must define and explain the diversity measures
themselves (next section).

In brief, a biological community will be modelled as a finite set (whose
elements are the species) equipped with both a probability distribution
(indicating the relative abundances of the species) and, for each pair of
elements of the set, a similarity coefficient (reflecting the similarities
between species).  

Let us now consider each of these aspects in turn.  First, we assume a
community or system of individuals, partitioned into $n \geq 1$ species.
The word `species' need not have its standard meaning: it can denote any
unit thought meaningful, such as genus, serotype (in the case of viruses),
or the class of organisms having a particular type of diet.  It need not
even be a biological grouping; for instance, in~\cite{McMi} the units are
soil types.  For concreteness, however, we write in terms of an ecological
community divided into species.  The division of a system into species or
types may be somewhat artificial, but this is mitigated by the introduction
of the similarity coefficients (as shown in~\cite{MDISS}, p.482).

Second, each species has a relative abundance, the proportion of organisms
in the community belonging to that species.  Thus, listing the species in
order as $1, \ldots, n$, the relative abundances determine a vector $\p =
(p_1, \ldots, p_n)$.  This is a probability distribution: $p_i \geq 0$ for
each species $i$, and $\sum_{i = 1}^n p_i = 1$.  Abundance can be measured
in any way thought relevant, e.g.\ number of individuals, biomass, or (in
the case of plants) ground coverage.

Critically, the word `diversity' refers only to the \emph{relative}, not
absolute, abundances.  If half of a forest burns down, or if a patient
loses $90\%$ of their gut bacteria, then it may be an ecological or medical
disaster; but assuming that the system is well-mixed, the diversity does
not change.  In the language of physics, diversity is an intensive quantity
(like density or temperature) rather than an extensive quantity (like mass
or heat), meaning that it is independent of the system's size.

The third and final aspect of the model is inter-species similarity.  For
each pair $(i, j)$ of species, we specify a real number $Z_{ij}$
representing the similarity between species $i$ and $j$.  This defines an
$n \times n$ matrix $\Z = (Z_{ij})_{1 \leq i, j \leq n}$.  In~\cite{MDISS},
similarity is taken to be measured on a scale of $0$ to $1$, with $0$
meaning total dissimilarity and $1$ that the species are identical.  Thus,
it is assumed there that%
\begin{equation}
\label{eq:stronger-sim-hyps}
0 \leq Z_{ij} \leq 1 \text{ for all } i, j,
\qquad
Z_{ii} = 1 \text{ for all } i.
\end{equation}
In fact, our maximization theorems will only require the weaker hypotheses
\begin{equation}
\label{eq:weaker-sim-hyps}
Z_{ij} \geq 0 \text{ for all } i, j,
\qquad
Z_{ii} > 0 \text{ for all } i
\end{equation}
together with the requirement that $\Z$ is a symmetric matrix.  (In the
appendix to~\cite{MDISS}, matrices satisfying~\eqref{eq:weaker-sim-hyps}
were called `relatedness matrices'.)

Just as the meanings of `species' and `abundance' are highly flexible, so
too is the meaning of `similarity':

\begin{example}
The simplest similarity matrix $\Z$ is the identity matrix $\I$.  This is
called the \demph{naive model} in~\cite{MDISS}, since it embodies the
assumption that distinct species have nothing in common.  Crude though this
assumption is, it is implicit in the diversity measures most popular
in the ecological literature \cite[Table~1]{MDISS}.  
\end{example}

\begin{example}
With the rapid fall in the cost of DNA sequencing, it is increasingly
common to measure similarity genetically (in any of several ways).  Thus,
the coefficients $Z_{ij}$ may be chosen to represent percentage genetic
similarities between species.  This is an effective strategy even when the
taxonomic classification is unclear or incomplete~\cite{MDISS}, as is often
the case for microbial communities~\cite{VPDL}.
\end{example}

\begin{example}
\label{eg:sim-phylo}
Given a suitable phylogenetic tree, we may define the similarity between
two present-day species as the proportion of evolutionary time before the
point at which the species diverged.
\end{example}

\begin{example}
\label{eg:sim-taxo}
In the absence of more refined data, we can measure species similarity
according to a taxonomic tree.  For instance, we might define
\[
Z_{ij}
=
\begin{cases}
1       &\text{if } i = j,      \\
0.8     &\text{if species } i \text{ and } j \text{ are different but of
  the same genus},  \\
0.5     &\text{if species } i \text{ and } j \text{ are of different genera
  but the same family},     \\
0       &\text{otherwise}.
\end{cases}
\]
\end{example}

\begin{example}
\label{eg:sim-metric}
In purely mathematical terms, an important case is where the similarity
matrix arises from a metric $d$ on the set $\{1, \ldots, n\}$ via the
formula $Z_{ij} = e^{-d(i, j)}$.  Thus, the community is modelled as a
probability distribution on a finite metric space.  (The naive model
corresponds to the metric defined by $d(i, j) = \infty$ for all $i \neq
j$.)  The diversity measures that we will shortly define can be understood
as (the exponentials of) R\'enyi-like entropies for such distributions. 
\end{example}

\section{The diversity measures}
\label{sec:measures}

Here we state the definition of the diversity measures of~\cite{MDISS},
which we will later seek to maximize.  We then explain the reasons for this
particular definition.

As in Section~\ref{sec:model}, we take a biological community modelled as a
finite probability distribution $\p = (p_1, \ldots, p_n)$ together with an
$n \times n$ matrix $\Z$ satisfying~\eqref{eq:weaker-sim-hyps}.  As
explained in Section~\ref{sec:spectrum}, we define not one diversity
measure but a family of them, indexed by a parameter $q \in [0, \infty]$
controlling the emphasis placed on rare species.  The \demph{diversity of
  order $q$} of the community is
\begin{equation}
\label{eq:the-measures}
\dqz(\p)
=
\biggl( 
\sum_{i \in \supp(\p)} p_i (\Z\p)_i^{q - 1} 
\biggr)^{1/(1 - q)}
\end{equation}
($q \neq 1, \infty$).  Here $\supp(\p)$ is the support of $\p$
(Conventions, p.\pageref{p:conventions}), $\Z \p$ is the column vector
obtained by multiplying the matrix $\Z$ by the column vector $\p$, and
$(\Z\p)_i$ is its $i$th entry.  The hypotheses~\eqref{eq:weaker-sim-hyps}
imply that $(\Z\p)_i > 0$ whenever $i \in \supp(\p)$, and so $\dqz(\p)$ is
well-defined.

Although this formula is invalid for $q = 1$, it converges as $q \to 1$,
and $\dvv{1}{\Z}(\p)$ is defined to be the limit.  The same is true for $q =
\infty$.  Explicitly,
\begin{align*}
\dvv{1}{\Z}(\p) &
=
\prod_{i \in \supp(\p)} (\Z\p)_i^{-p_i}
=
\exp \biggl( - \sum_{i \in \supp(\p)} p_i \log(\Z\p)_i \biggr),   \\
\dvv{\infty}{\Z}(\p)    &
=
1\Big/\max_{i \in \supp(\p)} (\Z\p)_i.
\end{align*}

The applicability, context and meaning of
definition~\eqref{eq:the-measures} are discussed at length in~\cite{MDISS}.
Here we briefly review the principal points.

First, the definition includes as special cases many existing quantities
going by the name of diversity or entropy.  For instance, in the naive
model $\Z = \I$, the diversity $\dvv{q}{\I}(\p)$ is the exponential of the
R\'enyi entropy of order $q$, and is also known in ecology as the Hill
number of order $q$.  (References for this and the next two paragraphs are
given in \cite[Table~1]{MDISS}.)

Continuing in the naive model $\Z = \I$ and specializing further to
particular values of $q$, we obtain other known quantities:
$\dvv{0}{\I}(\p)$ is species richness (the number of species present),
$\dvv{1}{\I}(\p)$ is the exponential of Shannon entropy, $\dvv{2}{\I}(\p)$
is the Gini--Simpson index (the reciprocal of the probability that two
randomly-chosen individuals are of the same species), and
$\dvv{\infty}{\I}(\p) = 1/\max_i p_i$ is the Berger--Parker index (a
measure of the dominance of the most abundant species).

Now allowing a general $\Z$, the diversity of order $2$ is $1 \big/
\sum_{i, j} p_i Z_{ij} p_j$.  Thus, diversity of order $2$ is the
reciprocal of the expected similarity between a random pair of individuals.
(The meaning given to `similarity' will determine the meaning of the
diversity measure: taking the coefficients $Z_{ij}$ to be genetic
similarities produces a genetic notion of diversity, and similarly
phylogenetic, taxonomic, and so on.)  Up to an increasing, invertible
transformation, this is the well-studied quantity known as Rao's quadratic
entropy.

Given distributions $\p$ and $\p'$ on the same list of species, different
values of $q$ may make different judgements on which of $\p$ and $\p'$ is
the more diverse.  For instance, with $\Z = \I$ and the two distributions
shown in Figure~\ref{fig:birds}, taking $q = 0$ 
makes community~(a) more diverse and embodies the first `extreme viewpoint'
described in Section~\ref{sec:spectrum}, whereas $q = \infty$ makes~(b)
more diverse and embodies the opposite extreme.

It is therefore most informative if we calculate the diversity of
\emph{all} orders $q \in [0, \infty]$.  The graph of $\dqz(\p)$ against $q$
is called the \demph{diversity profile} of $\p$.  Two distributions $\p$
and $\p'$ can be compared by plotting their diversity profiles on the same
axes.  If one curve is wholly above the other then the corresponding
distribution is unambiguously more diverse.  If they cross (as in
Figure~\ref{fig:main}(b), p.\pageref{fig:main}), then the judgement as
to which is the more diverse depends on how much importance is attached to
rare species.

The formula for $\dqz(\p)$ can be understood as follows.  

First, for a given species $i$, the quantity $(\Z\p)_i = \sum_j Z_{ij} p_j$
is the expected similarity between species $i$ and an individual chosen at
random.  Differently put, $(\Z\p)_i$ measures the ordinariness of the $i$th
species witin the community; in~\cite{MDISS}, it is called the `relative
abundance of species similar to the $i$th'.  Hence the mean ordinariness of
an individual in the community is $\sum_i p_i (\Z\p)_i$.  This measures the
\emph{lack} of diversity of the community, so its reciprocal is a measure
of diversity.  This is exactly $\dvv{2}{\Z}(\p)$.

To explain the diversity of orders $q \neq 2$, we recall the classical
notion of power mean.  Let $\p = (p_1, \ldots, p_n)$ be a finite
probability distribution and let $\x = (x_1, \ldots, x_n) \in [0,
  \infty)^n$, with $x_i > 0$ whenever $p_i > 0$.  For real $t \neq 0$, the
  \demph{power mean} of $\x$ of order $t$, weighted by $\p$, is
\[
M_t(\p, \x) 
=
\biggl( \sum_{i \in \supp(\p)} p_i x_i^t \biggr)^{1/t}
\]
\cite[Chapter~II]{HLP}.  This definition is extended to $t = 0$ and $t
= \pm \infty$ by taking limits in $t$, which gives
\[
M_{-\infty} (\p, \x)
=
\min_{i \in \supp(\p)} x_i,
\quad
M_0(\p, \x) 
=
\prod_{i \in \supp(\p)} \!\! x_i^{p_i},
\quad
M_\infty (\p, \x)
=
\max_{i \in \supp(\p)} x_i.
\]
Now, when we take the `mean ordinariness' in the previous paragraph, we can
replace the ordinary arithmetic mean (the case $t = 1$) by the power mean
of order $t = q - 1$.  Again taking the reciprocal, we obtain
exactly~\eqref{eq:the-measures}.  That is, 
\begin{equation}
\label{eq:mean-div}
\dqz(\p) = 1/M_{q - 1}(\p, \Z\p)
\end{equation}
for all $\p$, $\Z$, and $q \in [0, \infty]$.  So in all cases, diversity is
the reciprocal mean ordinariness of an individual within the community,
for varying interpretations of `mean'.

The diversity measures $\dqz(\p)$ have many good properties, discussed
in~\cite{MDISS}.  Crucially, they are \demph{effective numbers}: that is,
\[
\dvv{q}{\I}(1/n, \ldots, 1/n) = n
\]
for all $q$ and $n$.  This gives meaning to the quantities $\dqz(\p)$: if
$\dqz(\p) = 32.8$, say, then the community is nearly as diverse as a
community of $33$ completely dissimilar species in equal proportions.  With
the stronger assumptions~\eqref{eq:stronger-sim-hyps} on $\Z$, the value of
$\dqz(\p)$ always lies between $1$ and $n$.

Diversity profiles are decreasing: as less emphasis is given to rare
species, perceived diversity drops.  More precisely:
\begin{propn}
\label{propn:div-dichotomy}
Let $\p$ be a probability distribution on $\{1, \ldots, n\}$ and let $\Z$
be an $n \times n$ matrix satisfying~\eqref{eq:weaker-sim-hyps}.  If
$(\Z\p)_i$ has the same value $K$ for all $i \in \supp(\p)$ then $\dqz(\p)
= 1/K$ for all $q \in [0, \infty]$.  Otherwise, $\dqz(\p)$ is strictly
decreasing in $q \in [0, \infty]$.
\end{propn}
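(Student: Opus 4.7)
The plan is to reduce the proposition to the classical monotonicity property of weighted power means, via the identity $\dqz(\p) = 1/M_{q-1}(\p, \Z\p)$ established in~\eqref{eq:mean-div}. Since $q \mapsto q-1$ is an order-preserving bijection $[0, \infty] \to [-1, \infty]$, monotonicity of $\dqz(\p)$ in $q$ corresponds exactly to the opposite monotonicity of $M_{q-1}(\p, \Z\p)$ in $q$.

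The key classical fact I would invoke (see~\cite[Chapter~II]{HLP}) is this: for any probability vector $\p$ and any $\x \in [0, \infty)^n$ with $x_i > 0$ whenever $i \in \supp(\p)$, the weighted power mean $M_t(\p, \x)$ is a continuous, non-decreasing function of $t \in [-\infty, \infty]$; moreover, it is \emph{strictly} increasing on $[-\infty, \infty]$ unless $x_i$ takes a common value for all $i \in \supp(\p)$, in which case $M_t(\p, \x)$ equals that common value for every $t$. The hypothesis~\eqref{eq:weaker-sim-hyps} on $\Z$ guarantees that $(\Z\p)_i > 0$ for each $i \in \supp(\p)$, as the authors have already observed, so this theorem applies with $\x = \Z\p$.

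Applying it: if $(\Z\p)_i = K$ for all $i \in \supp(\p)$, then $M_{q-1}(\p, \Z\p) = K$ for every $q \in [0, \infty]$, whence $\dqz(\p) = 1/K$ identically; otherwise, $M_{q-1}(\p, \Z\p)$ is strictly increasing in $q$, so its reciprocal $\dqz(\p)$ is strictly decreasing. There is no serious obstacle here, since the proposition is essentially a repackaging of the HLP monotonicity theorem; the only small items to verify are the positivity hypothesis (immediate from~\eqref{eq:weaker-sim-hyps}) and that the endpoint cases $q = 1$ and $q = \infty$, where $\dqz$ is defined by limits rather than directly by~\eqref{eq:the-measures}, behave correctly — and this follows from continuity of $M_t$ in $t \in [-\infty, \infty]$.
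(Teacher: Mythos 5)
Your proposal is correct and matches the paper's own proof, which likewise deduces the proposition immediately from the identity $\dqz(\p) = 1/M_{q-1}(\p, \Z\p)$ in~\eqref{eq:mean-div} together with the classical monotonicity theorem for weighted power means \cite[Theorem~16]{HLP}. The small verifications you mention (positivity of $(\Z\p)_i$ on the support, and the limit definitions at $q = 1, \infty$) are handled the same way.
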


\begin{proof}
This is immediate from equation~\eqref{eq:mean-div} and a classical
result on power means~\cite[Theorem~16]{HLP}: $M_t(\p, \x)$ is increasing
in $t$, strictly so unless $x_i$ has the same value $K$ for all $i \in
\supp(\p)$, in which case it has constant value $K$.  
\end{proof}

So, any diversity profile is either constant or strictly decreasing.  The
first part of the next lemma states that diversity profiles are also
continuous:

\begin{lemma}
\label{lemma:div-cts}
Fix an $n \times n$ matrix $\Z$ satisfying~\eqref{eq:weaker-sim-hyps}.
Then:
\begin{enumerate}
\item 
\label{part:cts-q}
$\dqz(\p)$ is continuous in $q \in [0, \infty]$ for each distribution
$\p$; 

\item 
\label{part:cts-p}
$\dqz(\p)$ is continuous in $\p$ for each $q \in (0, \infty)$.
\end{enumerate}
\end{lemma}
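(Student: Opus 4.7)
Part~\ref{part:cts-q} should follow almost immediately from equation~\eqref{eq:mean-div}: for fixed $\p$, the vector $\Z\p$ has strictly positive entries on $\supp(\p)$ by~\eqref{eq:weaker-sim-hyps}, so the classical theory of power means (\cite[Chapter~II]{HLP}) gives continuity of $M_t(\p,\Z\p)$ in $t\in[-\infty,\infty]$, with the limit formulas at $t=0,\pm\infty$ being the required continuous extensions; taking $t=q-1$ and reciprocating gives the result.

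For part~\ref{part:cts-p}, the subtlety is that the defining sum in~\eqref{eq:the-measures} runs over the $\p$-dependent set $\supp(\p)$, so continuity at the boundary of the simplex is not automatic. I would treat $q\in(0,\infty)\setminus\{1\}$ first. Set
\[
\phi(\p)=\sum_{i\in\supp(\p)}p_i(\Z\p)_i^{q-1},
\]
so that $\dqz(\p)=\phi(\p)^{1/(1-q)}$; since $\phi>0$ on the simplex and $x\mapsto x^{1/(1-q)}$ is continuous on $(0,\infty)$, continuity of $\dqz$ reduces to continuity of $\phi$. The only non-trivial case is $\p\to\p_0$ with $\supp(\p)\supsetneq\supp(\p_0)$, for which I need $p_i(\Z\p)_i^{q-1}\to 0$ whenever $p_i\to 0$. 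The key tool is the two-sided estimate, valid whenever $p_i>0$,
\[
Z_{ii}p_i \leq (\Z\p)_i \leq C,
\]
where $C$ is a constant depending only on $\Z$; both inequalities are immediate from~\eqref{eq:weaker-sim-hyps}. For $q>1$ the upper bound yields $p_i(\Z\p)_i^{q-1}\leq C^{q-1}p_i\to 0$, while for $0<q<1$ the lower bound (with the inequality flipped by the negative exponent) gives $p_i(\Z\p)_i^{q-1}\leq Z_{ii}^{q-1}p_i^{q}\to 0$ since $q>0$.

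For $q=1$ I would apply the same idea to $-\log\dvv{1}{\Z}(\p)=\sum_{i\in\supp(\p)}p_i\log(\Z\p)_i$: the two-sided estimate squeezes $p_i\log(\Z\p)_i$ between $p_i\log(Z_{ii}p_i)$ and $p_i\log C$, both of which vanish as $p_i\to 0$ (using the elementary fact $p_i\log p_i\to 0$). The main obstacle throughout is really only the case $0<q<1$, where the $(q-1)$-power blows up at $0$; it is the hypothesis $Z_{ii}>0$, through the lower bound $(\Z\p)_i\geq Z_{ii}p_i$, that tames this blow-up. The restrictions $q\neq 0,\infty$ in the statement are indeed essential: at $q=0$ the ratio $p_i/(\Z\p)_i$ can remain bounded away from zero when species $i$ is similar only to itself, producing a genuine discontinuity, and at $q=\infty$ the maximum in the formula can jump as $\p$ crosses a face of the simplex.
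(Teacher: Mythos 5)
Your argument is correct, but it is not the route the paper takes: the paper gives no proof at all, simply citing Propositions~A1 and~A2 of the appendix of~\cite{MDISS}, so what you have written is a self-contained replacement for that citation. For part~\bref{part:cts-q}, reducing to continuity of $t \mapsto M_t(\p,\Z\p)$ on $[-\infty,\infty]$ via~\eqref{eq:mean-div} is exactly right, since the limit definitions of $\dvv{1}{\Z}$ and $\dvv{\infty}{\Z}$ are by construction the continuous extensions of the power-mean formula. For part~\bref{part:cts-p}, you have correctly identified the only genuine issue --- the support-dependent sum in~\eqref{eq:the-measures} --- and the two-sided bound $Z_{ii}p_i \leq (\Z\p)_i \leq C$ (with $C = \max_i \sum_j Z_{ij}$) does everything needed: the upper bound kills the vanishing terms when $q>1$, the lower bound (using $Z_{ii}>0$ from~\eqref{eq:weaker-sim-hyps}) gives $p_i(\Z\p)_i^{q-1} \leq Z_{ii}^{q-1}p_i^q \to 0$ when $0<q<1$, and the squeeze via $p_i\log p_i \to 0$ handles $q=1$. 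Your closing remarks on why $q=0$ and $q=\infty$ must be excluded are also accurate (e.g.\ $\dvv{0}{\I}$ is species richness, and for the path graph of Example~\ref{eg:graph-lin3} one checks that $\dvv{\infty}{\Z}$ jumps from $1$ to $2$ as $(1/2-\epsln,2\epsln,1/2-\epsln) \to (1/2,0,1/2)$). What your approach buys is independence from the external reference at the cost of a page of elementary estimates; what the paper's citation buys is brevity and consistency with~\cite{MDISS}, where these facts are established once for the whole family of measures.
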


\begin{proof}
See Propositions~A1 and~A2 of the appendix of~\cite{MDISS}.
\end{proof}

Finally, the measures have the sensible property that if some species have
zero abundance, then the diversity is the same as if they were not
mentioned at all.  To express this, we introduce some notation: given a
subset $B \sub \{1, \ldots, n\}$, we denote by $\Z_B$ the submatrix
$(Z_{ij})_{i, j \in B}$ of $\Z$.

\begin{lemma}[Absent species]
\label{lemma:abs}
Let $\Z$ be an $n \times n$ matrix satisfying~\eqref{eq:weaker-sim-hyps}.
Let $B \sub \{1, \ldots, n\}$, and let $\p$ be a probability distribution
on $\{1, \ldots, n\}$ such that $p_i = 0$ for all $i \not\in B$.  Then,
writing $\p'$ for the restriction of $\p$ to $B$,
\[
\dvv{q}{\Z_B}(\p')
= 
\dqz(\p) 
\]
for all $q \in [0, \infty]$.
\end{lemma}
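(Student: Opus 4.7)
The plan is to prove this by direct unpacking of the definition~\eqref{eq:the-measures}, handling the generic case $q \in (0,1) \cup (1,\infty)$ first and then deducing the endpoint cases $q = 0, 1, \infty$ by the continuity statement in Lemma~\ref{lemma:div-cts}\bref{part:cts-q}. The essential observation is that the hypothesis $p_i = 0$ for $i \notin B$ forces $\supp(\p) \sub B$, and that for any $i \in B$,
\[
(\Z\p)_i = \sum_{j=1}^n Z_{ij} p_j = \sum_{j \in B} Z_{ij} p_j = (\Z_B \p')_i,
\]
since only $j \in B$ contribute to the sum. Moreover, $\supp(\p) = \supp(\p')$ as subsets of $\{1,\ldots,n\}$ (viewing the latter as a subset of $B$), and $p_i = p'_i$ for $i \in B$.

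First I would fix $q \in (0, \infty) \setminus \{1\}$ and substitute the identity above into~\eqref{eq:the-measures}:
\[
\dqz(\p)
=
\biggl( \sum_{i \in \supp(\p)} p_i (\Z\p)_i^{q-1} \biggr)^{1/(1-q)}
=
\biggl( \sum_{i \in \supp(\p')} p'_i (\Z_B \p')_i^{q-1} \biggr)^{1/(1-q)}
=
\dvv{q}{\Z_B}(\p').
\]
This settles the generic case immediately. For $q = 1$ and $q = \infty$, the analogous computation goes through using the explicit formulas for $\dvv{1}{\Z}$ and $\dvv{\infty}{\Z}$ given just after~\eqref{eq:the-measures} (the product and the maximum are likewise only over $\supp(\p) \sub B$). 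Alternatively, one can invoke Lemma~\ref{lemma:div-cts}\bref{part:cts-q}: both $\dqz(\p)$ and $\dvv{q}{\Z_B}(\p')$ are continuous in $q$, and they agree on the dense set $(0,\infty) \setminus \{1\}$, hence agree at $q \in \{0,1,\infty\}$ as well.

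There is no real obstacle here; the lemma is a bookkeeping statement. The only point to be careful about is ensuring that the support $\supp(\p)$ is correctly re-indexed as a subset of $B$ when passing to $\p'$, but this is immediate from $p_i = 0$ for $i \notin B$, which also guarantees $(\Z_B\p')_i > 0$ for $i \in \supp(\p')$ via hypothesis~\eqref{eq:weaker-sim-hyps} applied to $\Z_B$, so that $\dvv{q}{\Z_B}(\p')$ is well-defined.
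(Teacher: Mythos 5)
Your proof is correct and is essentially the argument the paper has in mind: the paper simply declares the lemma trivial (citing a more general naturality property in the appendix of~\cite{MDISS}), and your direct verification that $(\Z\p)_i = (\Z_B\p')_i$ for $i \in B$ together with $\supp(\p) \sub B$ is exactly the routine unpacking being alluded to. The handling of $q \in \{0,1,\infty\}$ via the explicit formulas or via continuity is also fine (indeed $q=0$ is already covered by the generic formula~\eqref{eq:the-measures}).
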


\begin{proof}
This is trivial, and is also an instance of a more general naturality
property (Lemma~A13 in the appendix of~\cite{MDISS}).
\end{proof}

\section{Preparatory lemmas}
\label{sec:invt}

\textbf{For the rest of this work, fix an integer $n \geq 1$ and an $n
  \times n$ symmetric matrix $\Z$ of nonnegative reals whose diagonal
  entries are positive} (that is, \emph{strictly} greater than
zero).  Also write
\[
\Delta_n 
=
\bigl\{ (p_1, \ldots, p_n) \in \R^n 
\such 
p_i \geq 0, \ p_1 + \cdots + p_n = 1
\bigr\}
\]
for the set of probability distributions on $\{1, \ldots, n\}$.

To prove the main theorem, we begin by making two apparent digressions.

Let $\M$ be any matrix.  A \demph{weighting} on $\M$ is a column vector
$\w$ such that $\M\w$ is the column vector whose every entry is $1$.  It is
trivial to check that if both $\M$ and its transpose have at least one
weighting, then the quantity $\sum_i w_i$ is independent of the choice of
weighting $\w$ on $\M$; this quantity is called the \demph{magnitude}
$\mg{\M}$ of $\M$ \cite[Section~1.1]{MMS}.

When $\M$ is symmetric (the case of interest here), $\mg{\M}$ is defined
just as long as $\M$ has at least one weighting.  When $\M$ is invertible,
$\M$ has exactly one weighting and $\mg{\M}$ is the sum of all the entries
of $\M^{-1}$.

The second digression concerns the dichotomy expressed in
Proposition~\ref{propn:div-dichotomy}: every diversity profile is either
constant or strictly decreasing.  We now ask: which distributions have
constant diversity profile?

This question turns out to have a clean answer in terms of weightings and
magnitude.  To state it, we make some further definitions.

\begin{defn}
A probability distribution $\p$ on $\{1, \ldots, n\}$ is \demph{invariant}
if $\dqz(\p) = \dvv{q'}{\Z}(\p)$ for all $q, q' \in [0, \infty]$.
\end{defn}

Let $B \sub \{1, \ldots, n\}$, and let $\vecstyle{0} \neq \w \in [0,
  \infty)^B$ be a nonnegative vector.  Then there is a probability
  distribution $\pext{\w}$ on $\{1, \ldots, n\}$ defined by
\[
(\pext{\w})_i
=
\begin{cases}
w_i/\sum_{j \in B} w_j  &\text{if } i \in B     \\
0                       &\text{otherwise.}
\end{cases}
\]
In particular, let $B$ be a nonempty subset of $\{1, \ldots, n\}$ and $\w$
a nonnegative weighting on $\Z_B = (Z_{ij})_{i, j \in B}$.  Then $\w \neq
\vecstyle{0}$, so $\pext{\w}$ is defined, and $\pext{\w}_i = w_i/\mg{\Z_B}$
for all $i \in B$.

\begin{lemma}
\label{lemma:invt}
The following are equivalent for $\p \in \Delta_n$:
%
\begin{enumerate}
\item 
\label{part:invt-invt}
$\p$ is invariant;

\item
\label{part:invt-supp}
$(\Z\p)_i = (\Z\p)_j$ for all $i, j \in \supp(\p)$;

\item
\label{part:invt-ext}
$\p = \pext{\w}$ for some nonnegative weighting $\w$ on $\Z_B$ and some
nonempty subset $B \sub \{1, \ldots, n\}$.
\end{enumerate}
Moreover, in the situation of~\bref{part:invt-ext}, $\dqz(\p) = \mg{\Z_B}$
for all $q \in [0, \infty]$.
\end{lemma}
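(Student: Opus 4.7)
The plan is to establish the implications (i) $\Leftrightarrow$ (ii), (ii) $\Rightarrow$ (iii), and (iii) $\Rightarrow$ (ii), deducing the moreover clause along the way.

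For (i) $\Leftrightarrow$ (ii) I would simply quote Proposition~\ref{propn:div-dichotomy}: invariance is by definition the assertion that the diversity profile is constant, and the proposition says exactly that this happens precisely when $(\Z\p)_i$ takes a common value $K$ on $\supp(\p)$, in which case $\dqz(\p) = 1/K$ throughout.

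For (ii) $\Rightarrow$ (iii) I would take $B = \supp(\p)$ (nonempty since $\p \in \Delta_n$) and the common value $K$ of $(\Z\p)_i$ on $B$. The key positivity check is that $K > 0$: for any $i \in B$, $(\Z\p)_i \geq Z_{ii} p_i > 0$ using the standing hypotheses $Z_{ii} > 0$ and the fact that $p_i > 0$. Then the vector $\w \in [0,\infty)^B$ defined by $w_i = p_i/K$ satisfies $(\Z_B \w)_i = (1/K)\sum_{j\in B} Z_{ij} p_j = (1/K)(\Z\p)_i = 1$, so $\w$ is a nonnegative weighting on $\Z_B$. Summing gives $\mg{\Z_B} = 1/K$, and then $\pext{\w}_i = w_i/\mg{\Z_B} = K w_i = p_i$ for $i \in B$ (with both sides zero off $B$), so $\p = \pext{\w}$.

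For (iii) $\Rightarrow$ (ii), given $\p = \pext{\w}$ for a nonnegative weighting on $\Z_B$, I note $\supp(\p) \subseteq B$ and compute, for $i \in \supp(\p)$,
\[
(\Z\p)_i = \sum_{j \in B} Z_{ij} \frac{w_j}{\mg{\Z_B}} = \frac{(\Z_B \w)_i}{\mg{\Z_B}} = \frac{1}{\mg{\Z_B}},
\]
which is the same for every $i \in \supp(\p)$ (implicit here is $\mg{\Z_B} > 0$, which holds because $\w$ is a nonnegative, nonzero vector). The moreover clause then follows from the last sentence of Proposition~\ref{propn:div-dichotomy}: $\dqz(\p) = 1/(\Z\p)_i = \mg{\Z_B}$ for every $q$.

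The only nontrivial ingredient is the positivity bookkeeping (that $K > 0$ in the forward direction and $\mg{\Z_B} > 0$ in the reverse), both of which reduce to the standing assumption that the diagonal entries of $\Z$ are strictly positive. Otherwise the argument is a direct unwinding of definitions plus a citation of Proposition~\ref{propn:div-dichotomy}.
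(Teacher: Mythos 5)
Your proof is correct and follows essentially the same route as the paper: (i)\,$\Leftrightarrow$\,(ii) by citing Proposition~\ref{propn:div-dichotomy}, (ii)\,$\Rightarrow$\,(iii) by setting $B = \supp(\p)$ and $w_i = p_i/K$, and (iii)\,$\Rightarrow$\,(ii) plus the moreover clause by computing $(\Z\pext{\w})_i = 1/\mg{\Z_B}$ on the support. Your explicit positivity bookkeeping ($K \geq Z_{ii}p_i > 0$ and $\mg{\Z_B} > 0$) is a welcome touch of extra care over the paper's ``Evidently''.
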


\begin{proof}
\bref{part:invt-invt}$\iff$\bref{part:invt-supp} is immediate
from Proposition~\ref{propn:div-dichotomy}.

For \bref{part:invt-supp}$\implies$\bref{part:invt-ext},
assume~\bref{part:invt-supp}.  Put $B = \supp(\p)$ and write $K = (\Z\p)_i$
for any $i \in B$.  Then $K > 0$, so we may define $\w \in \R^B$ by $w_i =
p_i/K$ \ ($i \in B$).  Evidently $\p = \pext{\w}$ and $\w$ is nonnegative.
Furthermore, $\w$ is a weighting on $\Z_B$, since whenever $i \in B$,
\[
(\Z_B \w)_i     
=
\sum_{j \in B} Z_{ij} p_j/K
=
\sum_{j = 1}^n Z_{ij} p_j/K
=
1.
\]

Finally, for \bref{part:invt-ext}$\implies$\bref{part:invt-supp} and
`moreover', take $B$ and $\w$ as in~\bref{part:invt-ext}.  Then
$\supp(\pext{\w}) \sub B$, so for all $i \in \supp(\pext{\w})$,
\[
\bigl(\Z\cdot\pext{\w}\bigr)_i 
=
\bigl(\Z_B \w/\mg{\Z_B}\bigr)_i
=
1/\mg{\Z_B}.
\]
Hence $\dqz(\pext{\w}) = \mg{\Z_B}$ for all $q \in [0, \infty]$ by
Proposition~\ref{propn:div-dichotomy}. 
\end{proof}

We now prove a result that is much weaker than the main theorem, but
will act as a stepping stone in the proof.

\begin{lemma}  
\label{lemma:01}
For each $q \in (0, 1)$, there exists an invariant distribution that
maximizes $\dqz$.
\end{lemma}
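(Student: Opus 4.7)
First, $\dqz$ is continuous in $\p$ on the compact set $\Delta_n$ by Lemma~\ref{lemma:div-cts}\bref{part:cts-p}, so it attains a maximum at some $\p^* \in \Delta_n$. Set $B = \supp(\p^*)$ and let $\p'$ be the restriction of $\p^*$ to $B$. By Lemma~\ref{lemma:abs}, $\p'$ maximizes $\dvv{q}{\Z_B}$ over probability distributions on $B$, with the same diversity value; by Lemma~\ref{lemma:invt}\bref{part:invt-ext}, an invariant distribution for $\Z_B$ lifts (by extension by zeros) to an invariant distribution for $\Z$ of the same diversity. I may therefore replace $(\Z,\p^*)$ by $(\Z_B,\p')$ and assume $\p^*$ is strictly positive.

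Because $1-q > 0$ and all quantities involved are positive, maximizing $\dqz$ is equivalent to maximizing
\[
F(\p) \;=\; \dqz(\p)^{1-q} \;=\; \sum_{i=1}^n p_i\,(\Z\p)_i^{q-1}.
\]
Applying Lagrange multipliers at the interior maximum $\p^*$, the gradient of $F$ must be parallel to the all-ones vector. A direct differentiation, using the symmetry of $\Z$, gives
\[
\partial_j F(\p) \;=\; (\Z\p)_j^{q-1} + (q-1)\sum_{k} Z_{jk}\,p_k\,(\Z\p)_k^{q-2},
\]
and since $F$ is homogeneous of degree $q$ in $\p$, Euler's identity pins the common Lagrange value down to $qF(\p^*)$.

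The crucial step is then to deduce from these equalities that $(\Z\p^*)_i$ is constant in $i$; by Lemma~\ref{lemma:invt}\bref{part:invt-supp}, this will exhibit $\p^*$ as an invariant maximizer. The lever is strict convexity of $t \mapsto t^{q-1}$ on $(0,\infty)$ (since $(q-1)(q-2) > 0$ for $q \in (0,1)$). Informally, if two coordinates of $\Z\p^*$ disagreed, then transferring mass between the corresponding species ought to push $F$ strictly upward, contradicting maximality. I would make this rigorous by working with a maximizer of minimal support: any failure of invariance on such a minimal support should furnish either a feasible ascent direction for $F$ or another maximizer with strictly smaller support, in either case a contradiction.

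The main obstacle is executing this third step. The KKT relation couples $(\Z\p)_i$ with the auxiliary vector $u_k = p_k (\Z\p)_k^{q-2}$, whose image under $\Z$ appears alongside $(\Z\p)^{q-1}$, so the optimality condition is genuinely a nonlinear fixed-point equation rather than a direct algebraic constraint. Converting pointwise strict convexity of $t \mapsto t^{q-1}$ into a global monotonicity statement for $F$ along segments in $\Delta_n$ — while both the weights $p_i$ and the arguments $(\Z\p)_i$ of the underlying power mean vary simultaneously — is where the real work lies.
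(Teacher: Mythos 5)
Your setup is sound and matches the paper's as far as it goes: compactness and Lemma~\ref{lemma:div-cts}\bref{part:cts-p} give a maximizer, raising to the power $1-q$ reduces the problem to maximizing $F(\p) = \sum_i p_i (\Z\p)_i^{q-1}$, and your formula for $\partial_j F$ agrees with the paper's computation. But the proof stops exactly where the work starts: you state that "the crucial step is then to deduce from these equalities that $(\Z\p^*)_i$ is constant," sketch two possible strategies (a convexity/mass-transfer heuristic and a minimal-support descent), and then concede that executing either is "where the real work lies." That concession is accurate --- as written, nothing in the proposal actually derives the constancy of $(\Z\p^*)_i$ on the support, so the lemma is not proved.

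The missing idea in the paper is more elementary than the machinery you reach for, and it avoids the "nonlinear fixed-point" difficulty you flag. Pick $j, k \in \supp(\p)$ with $(\Z\p)_j$ \emph{least} and $(\Z\p)_k$ \emph{greatest}, and differentiate $F$ along the single direction $\delta_j - \delta_k$ (both coordinates are in the support, so this stays in $\Delta_n$ for small $t$ and the derivative at $t=0$ must vanish). The derivative is $(q-1)\bigl(\sum_i Z_{ij} p_i (\Z\p)_i^{q-2} - \sum_i Z_{ik} p_i (\Z\p)_i^{q-2}\bigr) + (\Z\p)_j^{q-1} - (\Z\p)_k^{q-1}$. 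Now use only the \emph{monotonicity} of $t \mapsto t^{q-2}$ (decreasing, since $q-2<0$) --- not convexity --- together with the extremality of $j$ and $k$: replace $(\Z\p)_i^{q-2}$ by $(\Z\p)_j^{q-2}$ in the first sum and by $(\Z\p)_k^{q-2}$ in the second, and then use the symmetry of $\Z$ to collapse $\sum_i Z_{ij} p_i = (\Z\p)_j$. This bounds the vanishing derivative below by $q\bigl((\Z\p)_j^{q-1} - (\Z\p)_k^{q-1}\bigr) \geq 0$, forcing equality and hence $(\Z\p)_j = (\Z\p)_k$. Since $j$ and $k$ were extremal, $(\Z\p)_i$ is constant on $\supp(\p)$, and Lemma~\ref{lemma:invt} finishes the argument. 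Your preliminary reduction to full support via Lemmas~\ref{lemma:abs} and~\ref{lemma:invt} is harmless but unnecessary; the two-point perturbation works directly on the support. Until you supply an argument of this kind (or a working version of your minimal-support descent), the proof has a genuine gap at its central step.
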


\begin{proof}
Let $q \in (0, 1)$.  Then $\dqz$ is continuous on the compact space
$\Delta_n$ (Lemma~\ref{lemma:div-cts}\bref{part:cts-p}), so attains a
maximum at some point $\p$.  Take $j, k \in \supp(\p)$ such that $(\Z\p)_j$
is least and $(\Z\p)_k$ is greatest.  By Lemma~\ref{lemma:invt}, it is
enough to prove that $(\Z\p)_j = (\Z\p)_k$.

Define $\delta_j \in \Delta_n$ by taking $(\delta_j)_i$ to be the Kronecker
delta $\delta_{ji}$, and $\delta_k$ similarly.  Then $\p + t(\delta_j -
\delta_k) \in \Delta_n$ for all real $t$ sufficiently close to $0$, and
\begin{align}
0       &
=
\frac{d}{dt} 
\Bigl( 
\dqz \bigl(\p + t(\delta_j - \delta_k)\bigr)^{1 - q} 
\Bigr)\bigg|_{t = 0}
\label{eq:path-critical}        \\
&
=
(q - 1) \Biggl( 
\sum_{i \in \supp(\p)} Z_{ij} p_i (\Z\p)_i^{q - 2} -
\sum_{i \in \supp(\p)} Z_{ik} p_i (\Z\p)_i^{q - 2} 
\Biggr)
+ (\Z\p)_j^{q - 1} - (\Z\p)_k^{q - 1}
\label{eq:path-explicit}        \\
&
\geq
(q - 1) \Biggl( 
\sum_{i = 1}^n Z_{ij} p_i (\Z\p)_j^{q - 2} -
\sum_{i = 1}^n Z_{ik} p_i (\Z\p)_k^{q - 2} 
\Biggr)
+ (\Z\p)_j^{q - 1} - (\Z\p)_k^{q - 1}
\label{eq:path-ineq}            \\
&
=
q \bigl( (\Z\p)_j^{q - 1} - (\Z\p)_k^{q - 1} \bigr) 
\label{eq:path-sym}       \\
&
\geq 0,
\label{eq:path-coup}
\end{align}
where~\eqref{eq:path-critical} holds because $\p$ is a supremum,
\eqref{eq:path-explicit}~is a routine computation, \eqref{eq:path-ineq}~and
\eqref{eq:path-coup}~follow from the defining properties of $j$ and $k$,
and~\eqref{eq:path-sym} uses the symmetry of $\Z$.  Equality therefore
holds throughout, and in particular in~\eqref{eq:path-coup}.  Hence
$(\Z\p)_j = (\Z\p)_k$, as required.
\end{proof}

An alternative proof uses Lagrange multipliers, but is complicated by the
possibility that $\dqz$ attains its maximum on the boundary of $\Delta_n$.

The result we have just proved only concerns the maximization of $\dqz$ for
specific values of $q$, but the following lemma will allow us to deduce
results about maximization for all $q$ simultaneously.

\begin{defn}
A probability distribution on $\{1, \ldots, n\}$ is \demph{maximizing}
if it maximizes $\dqz$ for all $q \in [0, \infty]$.
\end{defn}

\begin{lemma}
\label{lemma:invt-varying-q}
For $0 \leq q' \leq q \leq \infty$, any invariant distribution that
maximizes $\dvv{q'}{\Z}$ also maximizes $\dqz$.  In particular, any
invariant distribution that maximizes $\dvv{0}{\Z}$ is maximizing.
\end{lemma}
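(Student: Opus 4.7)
The plan is a short chain of three comparisons. Let $\p$ be invariant and a maximizer of $\dvv{q'}{\Z}$, and let $\p'\in\Delta_n$ be arbitrary; I want to show $\dqz(\p)\ge\dqz(\p')$.

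First, since $\p$ is invariant, by definition $\dqz(\p)=\dvv{q'}{\Z}(\p)$. Second, since $\p$ maximizes $\dvv{q'}{\Z}$, I have $\dvv{q'}{\Z}(\p)\ge\dvv{q'}{\Z}(\p')$. Third, the monotonicity half of Proposition~\ref{propn:div-dichotomy} (diversity profiles are nonincreasing in $q$) applied to the distribution $\p'$ gives $\dvv{q'}{\Z}(\p')\ge\dqz(\p')$, using $q\ge q'$. Concatenating, $\dqz(\p)\ge\dqz(\p')$, so $\p$ maximizes $\dqz$.

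For the second sentence, specialize to $q'=0$: an invariant maximizer of $\dvv{0}{\Z}$ is then a maximizer of $\dqz$ for every $q\in[0,\infty]$, i.e.\ a maximizing distribution in the sense of the definition just given. The endpoint case $q=\infty$ is fine because Proposition~\ref{propn:div-dichotomy} is stated for all $q\in[0,\infty]$.

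There is no serious obstacle here: the argument is just the observation that \emph{invariance converts monotonicity of the profile into maximality at higher $q$}. The only thing to double-check is that Proposition~\ref{propn:div-dichotomy} does indeed give nonincreasing (not merely strictly decreasing or constant) behavior on the full interval $[0,\infty]$, which it does by the stated dichotomy between a constant profile and a strictly decreasing one.
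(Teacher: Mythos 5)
Your proof is correct and is essentially identical to the paper's: both use the same three-link chain $\dqz(\p') \le \dvv{q'}{\Z}(\p') \le \dvv{q'}{\Z}(\p) = \dqz(\p)$, combining monotonicity of the profile (Proposition~\ref{propn:div-dichotomy}), maximality at $q'$, and invariance. Nothing further is needed.
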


\begin{proof}
Let $0 \leq q' \leq q \leq \infty$ and let $\p$ be an invariant
distribution that maximizes $\dvv{q'}{\Z}$.  Then for all $\rvec \in
\Delta_n$,
\[
\dqz(\rvec) 
\leq 
\dvv{q'}{\Z}(\rvec)
\leq
\dvv{q'}{\Z}(\p)
=
\dqz(\p),
\]
since diversity profiles are decreasing
(Proposition~\ref{propn:div-dichotomy}).
\end{proof}

\section{The main theorem}
\label{sec:main}

\begin{thm}[Main theorem]
\label{thm:main}
There exists a probability distribution on $\{1, \ldots, n\}$ that
maximizes $\dqz$ for all $q \in [0, \infty]$.  Moreover, the maximum
diversity $\sup_{\p \in \Delta_n} \dqz(\p)$ is independent of $q \in [0,
  \infty]$.
\end{thm}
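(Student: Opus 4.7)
The plan is to combine Lemma~\ref{lemma:01} with Lemma~\ref{lemma:invt-varying-q} to show first that $\sup_{\p \in \Delta_n} \dqz(\p)$ is constant in $q \in (0,\infty]$, and then to use a pigeonhole and compactness argument to produce a single distribution realizing this supremum for every $q \in [0,\infty]$.

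For each $q' \in (0,1)$, Lemma~\ref{lemma:01} supplies an invariant maximizer $\p_{q'}$ of $\dvv{q'}{\Z}$. By Lemma~\ref{lemma:invt-varying-q}, $\p_{q'}$ also maximizes $\dvv{q}{\Z}$ for every $q \geq q'$; and since $\p_{q'}$ is invariant, $\dvv{q}{\Z}(\p_{q'}) = \dvv{q'}{\Z}(\p_{q'})$. Hence $\sup_\p \dvv{q}{\Z}(\p) = \sup_\p \dvv{q'}{\Z}(\p)$ whenever $0 < q' \leq q \leq \infty$, and letting $q' \to 0^+$ shows that $\sup_\p \dvv{q}{\Z}(\p)$ takes a single value $M$ on all of $(0,\infty]$.

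To produce a simultaneous maximizer, I would pick a sequence $q_k \to 0$ in $(0,1)$, take invariant maximizers $\p_{q_k}$ of $\dvv{q_k}{\Z}$, and set $B_k = \supp(\p_{q_k})$. By Lemma~\ref{lemma:invt}\bref{part:invt-ext}, $\p_{q_k}$ has the form $\pext{\w_{q_k}}$ for a nonnegative weighting $\w_{q_k}$ on $\Z_{B_k}$, and the ``moreover'' clause gives $\dvv{q_k}{\Z}(\p_{q_k}) = \mg{\Z_{B_k}}$, which must equal $M$. Since $\{1,\ldots,n\}$ has only finitely many subsets, pigeonhole lets me pass to a subsequence on which $B_k$ equals a fixed set $B$; compactness of $\Delta_n$ then gives a further subsequence with $\p_{q_k} \to \p^*$, and necessarily $\supp(\p^*) \sub B$. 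From the weighting formula one computes $(\Z\p_{q_k})_i = 1/\mg{\Z_B}$ for every $i \in B$, so continuity of $\rvec \mapsto \Z\rvec$ yields $(\Z\p^*)_i = 1/\mg{\Z_B}$ for every $i \in B$, in particular for every $i \in \supp(\p^*)$. By Lemma~\ref{lemma:invt}\bref{part:invt-supp} and its ``moreover'' clause, $\p^*$ is invariant with $\dqz(\p^*) = \mg{\Z_B} = M$ for all $q \in [0,\infty]$, so $\p^*$ maximizes $\dqz$ for every $q \in (0,\infty]$.

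Finally, for $q = 0$ I would appeal to continuity in $q$ (Lemma~\ref{lemma:div-cts}\bref{part:cts-q}): for any $\rvec \in \Delta_n$,
\[
\dvv{0}{\Z}(\rvec) = \lim_{q \to 0^+} \dqz(\rvec) \leq M,
\]
while $\dvv{0}{\Z}(\p^*) = M$, so $\p^*$ maximizes $\dvv{0}{\Z}$ as well and $\sup_\p \dvv{0}{\Z}(\p) = M$. The main obstacle, which the pigeonhole step is designed to overcome, is that Lemma~\ref{lemma:01} only delivers an \emph{a priori} different invariant maximizer for each $q$; the observation unlocking the argument is that by Lemma~\ref{lemma:invt} the diversity of an invariant distribution is determined entirely by its support, which is a finite combinatorial datum, so only finitely many values of $\dvv{q}{\Z}(\p_q)$ can occur as $q$ varies.
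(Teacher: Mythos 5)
Your proposal is correct and follows essentially the same route as the paper: both extract a convergent subsequence of invariant maximizers of $\dvv{q_k}{\Z}$ for orders $q_k \to 0^+$ supplied by Lemma~\ref{lemma:01}, show that the limit is invariant and maximizing, and use continuity in $q$ (Lemma~\ref{lemma:div-cts}\bref{part:cts-q}) to handle $q = 0$. The only real divergence is that your pigeonhole-on-supports step identifies the common value as $\mg{\Z_B}$ and concludes via Proposition~\ref{propn:div-dichotomy}, whereas the paper instead invokes continuity of $\dqz$ in $\p$ (Lemma~\ref{lemma:div-cts}\bref{part:cts-p}) to show the limit still maximizes each $\dvv{q_k}{\Z}$ --- a minor technical variation rather than a different argument.
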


\begin{proof}
An equivalent statement is that there exists an invariant maximizing
distribution.  To prove this, choose a decreasing sequence
$(q_\lambda)_{\lambda = 1}^\infty$ in $(0, 1)$ converging to $0$.  By
Lemma~\ref{lemma:01}, we can choose for each $\lambda \geq 1$ an invariant
distribution $\p^\lambda$ that maximizes $\dvv{q_\lambda}{\Z}$.  Since
$\Delta_n$ is compact, we may assume (by passing to a subsequence if
necessary) that the sequence $(\p^\lambda)$ converges to some point $\p \in
\Delta_n$.  We will show that $\p$ is invariant and maximizing.

We show that $\p$ is invariant using Lemma~\ref{lemma:invt}.  Let $i, j \in
\supp(\p)$.  Then $i, j \in \supp(\p^\lambda)$ for all $\lambda \gg 0$, so
$(\Z\p^\lambda)_i = (\Z\p^\lambda)_j$ for all $\lambda \gg 0$, and letting
$\lambda \to \infty$ gives $(\Z\p)_i = (\Z\p)_j$.

To show that $\p$ is maximizing, first note that $\p^{\lambda'}$ maximizes
$\dvv{q_\lambda}{\Z}$ whenever $\lambda' \geq \lambda \geq 1$ (by
Lemma~\ref{lemma:invt-varying-q}).  Fixing $\lambda$ and letting $\lambda'
\to \infty$, this implies that $\p$ maximizes $\dvv{q_\lambda}{\Z}$, since
$\dvv{q_\lambda}{\Z}$ is continuous
(Lemma~\ref{lemma:div-cts}\bref{part:cts-p}).

Thus, $\p$ maximizes $\dvv{q_\lambda}{\Z}$ for all $\lambda$.  But
$q_\lambda \to 0$ as $\lambda \to \infty$, and diversity is continuous in
its order (Lemma~\ref{lemma:div-cts}\bref{part:cts-q}), so $\p$ maximizes
$\dvv{0}{\Z}$.  Since $\p$ is invariant, Lemma~\ref{lemma:invt-varying-q}
implies that $\p$ is maximizing.
\end{proof}

The theorem can be understood as follows (Figure~\ref{fig:main}(a)).  Each
particular value of the viewpoint parameter $q$ ranks the set of all
distributions $\p$ in order of diversity, with $\p$ placed above $\p'$ when
$\dqz(\p) > \dqz(\p')$.  Different values of $q$ rank the set of
distributions differently.  Nevertheless, there is a distribution $\pmax$
that is at the top of every ranking.  This is the content of the first half
of Theorem~\ref{thm:main}.

Alternatively, we can visualize the theorem in terms of diversity profiles
(Figure~\ref{fig:main}(b)).  Diversity profiles may cross, reflecting the
different priorities embodied by different values of $q$.  But there is at
least one distribution $\pmax$ whose profile is above every other profile;
moreover, its profile is constant.

\begin{figure}
\begin{center}
\setlength{\unitlength}{1mm}
\begin{picture}(50,37)
\cell{25}{-2}{t}{(a)}
\cell{10}{0}{b}{$q = 0$}
\cell{10}{5}{b}{\includegraphics[width=20\unitlength]{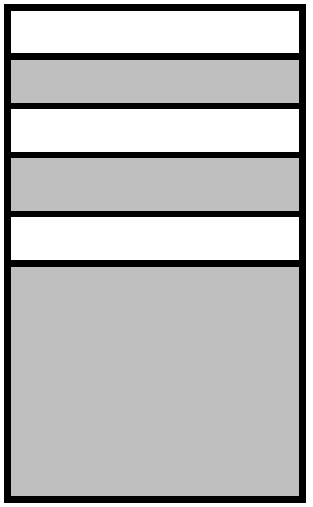}}
\cell{10}{35.2}{c}{\pmaxcolor$\pmax$}
\cell{10}{28.9}{c}{\pcolor$\p$}
\cell{10}{22.0}{c}{\pprimecolor$\p\scriptstyle'$}
\cell{40}{0}{b}{$q = 2$}
\cell{40}{5}{b}{\includegraphics[width=20\unitlength]{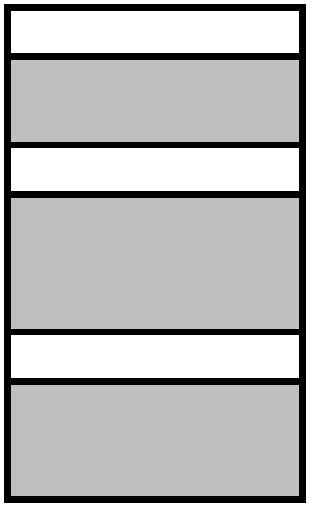}}
\cell{40}{35.2}{c}{\pmaxcolor$\pmax$}
\cell{40}{26.5}{c}{\pprimecolor$\p\scriptstyle'$}
\cell{40}{14.5}{c}{\pcolor$\p$}
\end{picture}%
\hspace*{10\unitlength}%
\begin{picture}(60,37)
\cell{30}{-2}{t}{(b)}
\cell{30}{5}{b}{\includegraphics[width=60\unitlength]{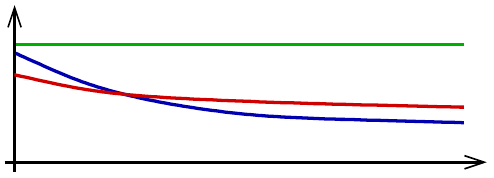}}
\cell{1}{15}{r}{$\dqz$}
\cell{30}{5}{t}{$q$}
\cell{10}{18.5}{c}{\pcolor$\p$}
\cell{6}{14}{c}{\pprimecolor$\p\scriptstyle'$}
\cell{16}{23}{c}{\pmaxcolor$\pmax$}
\end{picture}
\end{center}
\caption{Visualizations of the main theorem: (a)~in terms of how different
  values of $q$ rank the set of distributions, and (b)~in terms of
  diversity profiles.}
\label{fig:main}
\end{figure}

Theorem~\ref{thm:main} immediately implies:
\begin{cor}
\label{cor:max-invt}
Every maximizing distribution is invariant.
\qed
\end{cor}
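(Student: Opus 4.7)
The plan is essentially a one-line deduction from Theorem~\ref{thm:main}, and I will just unpack it. Suppose $\p \in \Delta_n$ is a maximizing distribution, meaning $\dqz(\p) = \sup_{\rvec \in \Delta_n} \dvv{q}{\Z}(\rvec)$ for every $q \in [0, \infty]$. The second clause of Theorem~\ref{thm:main} asserts that this supremum does not depend on $q$; call its common value $D$. Then $\dqz(\p) = D$ for all $q \in [0, \infty]$, so the diversity profile of $\p$ is the constant function $D$. By the definition of \emph{invariant}, this is exactly the statement that $\p$ is invariant.

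Since the argument uses only the existence of a uniform maximum value (already in hand from Theorem~\ref{thm:main}) together with the definition of invariance, there is no real obstacle to overcome; the corollary records the immediate observation that the ``moreover'' clause of the main theorem forces the profile of \emph{any} maximizer, not just the particular one constructed in the proof, to be constant.
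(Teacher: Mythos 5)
Your argument is correct and is exactly the deduction the paper intends (it states the corollary as an immediate consequence of Theorem~\ref{thm:main}): a maximizer attains the supremum for every $q$, the ``moreover'' clause makes that supremum independent of $q$, so the profile is constant and $\p$ is invariant by definition. Nothing is missing.
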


This result can be partially understood as follows.  For Shannon entropy, and
more generally any of the R\'enyi entropies, the maximizing distribution is
obtained by taking the relative abundance $p_i$ to be the same for all
species $i$.  This is no longer true when inter-species similarities are
taken into account.  However, what is approximately true is that diversity
is maximized when $(\Z\p)_i$, the relative abundance of species
\emph{similar} to the $i$th, is the same for all species $i$.  This follows
from Corollary~\ref{cor:max-invt} together with the characterization of
invariant distributions in Lemma~\ref{lemma:invt}\bref{part:invt-supp}; but
it is only `approximately true' because it is only guaranteed that
$(\Z\p)_i = (\Z\p)_j$ when $i$ and $j$ both belong to the support of $\p$,
not for all $i$ and $j$.  It may in fact be that some or all maximizing
distributions do not have full support, a phenomenon we examine in
Section~\ref{sec:elim}.

The second half of Theorem~\ref{thm:main} tells us that associated with the
matrix $\Z$ is a numerical invariant, the constant value of a maximizing
distribution:
\begin{defn}
The \demph{maximum diversity} of $\Z$ is $\Dmax{\Z} = \sup_{\p \in
  \Delta_n} \dqz(\p)$, for any $q \in [0, \infty]$.
\end{defn}
We show how to compute $\Dmax{\Z}$ in the next section.

If a distribution $\p$ maximizes diversity of order $2$, say, must it also
maximize diversity of orders $1$ and $\infty$?  The answer turns out to be
yes:

\begin{cor}
\label{cor:irrelevance}
Let $\p$ be a probability distribution on $\{1, \ldots, n\}$.  If $\p$
maximizes $\dqz$ for some $q \in (0, \infty]$ then $\p$ maximizes $\dqz$
for all $q \in [0, \infty]$.
\end{cor}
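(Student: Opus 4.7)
The plan is to combine two results already established: the second half of Theorem~\ref{thm:main}, which says that the maximum value $D = \Dmax{\Z}$ is independent of $q \in [0,\infty]$, and the dichotomy in Proposition~\ref{propn:div-dichotomy}, which says that the diversity profile of any distribution is either constant on $[0,\infty]$ or strictly decreasing there. Given a distribution $\p$ that maximizes $\dvv{q_0}{\Z}$ for some $q_0 \in (0,\infty]$, the strategy is to show that its profile must be constant (equivalently, that $\p$ is invariant), which forces $\dqz(\p) = D$ for every $q$.

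Concretely, I would argue as follows. By hypothesis $\dvv{q_0}{\Z}(\p) = D$. Consider the diversity profile $q \mapsto \dqz(\p)$. By Proposition~\ref{propn:div-dichotomy}, it is either constant on $[0,\infty]$ or strictly decreasing there. In the strictly decreasing case, the assumption $q_0 > 0$ guarantees the existence of some $q' \in [0, q_0)$, and strict monotonicity then gives $\dvv{q'}{\Z}(\p) > \dvv{q_0}{\Z}(\p) = D$; this contradicts the fact that $D$ is the supremum of $\dvv{q'}{\Z}$ over $\Delta_n$. Hence the profile must be the constant $D$, which is precisely the conclusion we want.

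No step is really an obstacle; the content of the corollary is almost a free consequence of Theorem~\ref{thm:main} once one notices that the profile dichotomy promotes equality at a single $q_0 > 0$ into equality at every $q$. The only subtle point worth flagging is the role of the hypothesis $q_0 > 0$: without some room strictly below $q_0$ in which a decreasing profile could take values exceeding $D$, the contradiction disappears, and one would indeed expect distributions maximizing $\dvv{0}{\Z}$ (species richness, ignoring similarities) not to maximize $\dqz$ for general $q$.
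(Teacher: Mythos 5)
Your proof is correct and is essentially the paper's own argument: both rest on the $q$-independence of $\Dmax{\Z}$ from Theorem~\ref{thm:main} together with the constant-or-strictly-decreasing dichotomy of Proposition~\ref{propn:div-dichotomy}, used to promote maximality at one $q_0>0$ to constancy of the whole profile at the value $\Dmax{\Z}$. The paper phrases this as a chain of equalities at $q'=0$ followed by Lemma~\ref{lemma:invt-varying-q}, whereas you argue by contradiction, but the content is identical.
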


\begin{proof}
Let $q \in (0, \infty]$ and let $\p$ be a distribution maximizing $\dqz$.
Then 
\[
\dqz(\p)
\leq
\dvv{0}{\Z}(\p)
\leq
\Dmax{\Z}
=
\dqz(\p),
\]
where the first inequality holds because diversity profiles are decreasing.
So equality holds throughout.  Now $\dqz(\p) = \dvv{0}{\Z}(\p)$ with $q
\neq 0$, so Proposition~\ref{propn:div-dichotomy} implies that $\p$ is
invariant.  But also $\dvv{0}{\Z}(\p) = \Dmax{\Z}$, so $\p$ maximizes
$\dvv{0}{\Z}$.  Hence by Lemma~\ref{lemma:invt-varying-q}, $\p$ is
maximizing.
\end{proof}

The significance of this corollary is that if we wish to find a
distribution that maximizes diversity of all orders $q$, it suffices to
find a distribution that maximizes diversity of a single nonzero order.

The hypothesis that $q > 0$ in Corollary~\ref{cor:irrelevance} cannot be
dropped.  Indeed, take $\Z = \I$.  Then $\dvv{0}{\I}(\p)$ is species
richness (the cardinality of $\supp(\p)$), which is maximized by any
distribution $\p$ of full support, whereas $\dvv{1}{\I}(\p)$ is the
exponential of Shannon entropy, which is maximized only when $\p$ is
uniform.

\section{The computation theorem}
\label{sec:comp}

The main theorem guarantees the existence of a maximizing distribution
$\pmax$, but does not tell us how to find it.  It also states that 
$\dqz(\pmax)$ is independent of $q$, but does not tell us what its value
is.  The following result repairs both omissions.

\begin{thm}[Computation theorem]
\label{thm:comp}
\begin{enumerate}
\item 
\label{part:comp-value}
For all $q \in [0, \infty]$,
\begin{equation}
\label{eq:sup-max}
\sup_{\p \in \Delta_n} \dqz(\p)
=
\max_B \mg{\Z_B}
\end{equation}
where the maximum is over all $B \sub \{1, \ldots, n\}$ such that $\Z_B$
admits a nonnegative weighting.

\item
\label{part:comp-dist}
The maximizing distributions are precisely those of the form $\pext{\w}$
where $\w$ is a nonnegative weighting on $\Z_B$ for some $B$ attaining the
maximum in~\eqref{eq:sup-max}.
\end{enumerate}
\end{thm}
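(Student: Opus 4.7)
The plan is to derive Theorem~\ref{thm:comp} as an essentially immediate consequence of the main theorem together with Corollary~\ref{cor:max-invt} and the characterization of invariant distributions in Lemma~\ref{lemma:invt}. The key observation is that by Corollary~\ref{cor:max-invt}, every maximizing distribution is invariant, and by Lemma~\ref{lemma:invt}\bref{part:invt-ext}, an invariant distribution must have the form $\pext{\w}$ for a nonnegative weighting $\w$ on some $\Z_B$, with constant diversity value $\mg{\Z_B}$. Once one notices this, both statements of the computation theorem essentially read off.

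For part~\bref{part:comp-value}, I would first establish the inequality $\Dmax{\Z} \leq \max_B \mg{\Z_B}$. Theorem~\ref{thm:main} provides a maximizing distribution $\pmax$; Corollary~\ref{cor:max-invt} asserts that $\pmax$ is invariant; and Lemma~\ref{lemma:invt} then writes $\pmax = \pext{\w}$ for some nonempty $B$ and some nonnegative weighting $\w$ on $\Z_B$, with $\Dmax{\Z} = \dqz(\pmax) = \mg{\Z_B}$. This $B$ certainly admits a nonnegative weighting, so the supremum is at most $\max_B \mg{\Z_B}$. For the reverse inequality, any $B$ admitting a nonnegative weighting $\w$ yields a distribution $\pext{\w} \in \Delta_n$ with $\dqz(\pext{\w}) = \mg{\Z_B}$ by the ``moreover'' clause of the same lemma, so $\mg{\Z_B} \leq \Dmax{\Z}$.

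For part~\bref{part:comp-dist}, the backward direction is again a direct application: if $\w$ is a nonnegative weighting on some $\Z_B$ attaining the maximum in~\eqref{eq:sup-max}, then $\dqz(\pext{\w}) = \mg{\Z_B} = \Dmax{\Z}$ by part~\bref{part:comp-value}, so $\pext{\w}$ is maximizing. Conversely, suppose $\p$ is maximizing. Then $\p$ is invariant by Corollary~\ref{cor:max-invt}, hence of the form $\pext{\w}$ for a nonnegative weighting $\w$ on some $\Z_B$, and $\mg{\Z_B} = \dqz(\p) = \Dmax{\Z} = \max_{B'} \mg{\Z_{B'}}$, so $B$ attains the maximum.

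There is no genuine obstacle here: the real work was done in proving Theorem~\ref{thm:main} and Lemma~\ref{lemma:invt}, and the computation theorem simply harvests their consequences. The only point deserving even mild care is to check that the ``moreover'' clause of Lemma~\ref{lemma:invt} covers every use we make of it, supplying both the numerical equality $\dqz(\pext{\w}) = \mg{\Z_B}$ for each inequality in part~\bref{part:comp-value} and the explicit form $\pext{\w}$ of the maximizers in part~\bref{part:comp-dist}. Algorithmically, using the theorem to compute $\Dmax{\Z}$ would require testing each of the (finitely many) subsets $B$ for the existence of a nonnegative weighting on $\Z_B$; this justifies the remark that the maximum can be found in finite time, but it is not itself part of the claim.
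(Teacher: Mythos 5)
Your proposal is correct and follows essentially the same route as the paper: the paper's proof is exactly the chain "maximizing $\Rightarrow$ invariant (Theorem~\ref{thm:main}, Corollary~\ref{cor:max-invt}) $\Rightarrow$ of the form $\pext{\w}$ with value $\mg{\Z_B}$ (Lemma~\ref{lemma:invt})," written as a string of equalities of suprema rather than as two inequalities. The only cosmetic difference is that the paper explicitly disposes of the case $B = \emptyset$ via $\mg{\Z_B} \geq 0 = \mg{\Z_\emptyset}$, a triviality your argument implicitly covers.
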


\begin{proof}
Let $q \in [0, \infty]$.  Then
\begin{align}
&
\sup \{ \dqz(\p) \such \p \in \Delta_n \} 
\nonumber
\\
&
=
\sup \{ \dqz(\p) \such \p \in \Delta_n, \ \p \text{ is invariant} \}    
\label{eq:comp-invt}    \\
&
=
\sup \{ \mg{\Z_B} \such \emptyset \neq B \sub \{1, \ldots, n\},
\ \Z_B \text{ admits a nonnegative weighting} \}    
\label{eq:comp-wtg}     \\
&
=
\max \{ \mg{\Z_B} \such B \sub \{1, \ldots, n\},
\ \Z_B \text{ admits a nonnegative weighting} \},
\label{eq:comp-empty}     
\end{align}
where~\eqref{eq:comp-invt} follows from the fact that there is an invariant
maximizing distribution (Theorem~\ref{thm:main}),
\eqref{eq:comp-wtg}~follows from Lemma~\ref{lemma:invt},
and~\eqref{eq:comp-empty} follows from the trivial fact that
$\mg{\Z_B} \geq 0 = \mg{\Z_\emptyset}$ whenever $\Z_B$ admits a nonnegative
weighting. 

This proves~\bref{part:comp-value}.  Every maximizing distribution is
invariant (Corollary~\ref{cor:max-invt}), so~\bref{part:comp-dist} follows
from Lemma~\ref{lemma:invt}.
\end{proof}

\begin{remark}
\label{rmk:algorithm}
The theorem provides a finite-time algorithm for finding all the maximizing
distributions and computing $\Dmax{\Z}$, as follows.  For each of the $2^n$
subsets $B$ of $\{1, \ldots, n\}$, perform some simple linear algebra to
find the space of nonnegative weightings on $\Z_B$; if this space is
nonempty, call $B$ \demph{feasible} and record the magnitude $\mg{\Z_B}$.
Then $\Dmax{\Z}$ is the maximum of all the recorded magnitudes.  For each
feasible $B$ such that $\mg{\Z_B} = \Dmax{\Z}$, and each nonnegative
weighting $\w$ on $\Z_B$, the distribution $\pext{\w}$ is maximizing.  This
generates all of the maximizing distributions.

This algorithm takes exponentially many steps in $n$, and
Remark~\ref{rmk:no-quick-clique} provides strong evidence that the time
taken cannot be reduced to a polynomial in $n$.  But the situation is not
as hopeless as it might appear, for two reasons.

First, each step of the algorithm is fast, consisting as it does of solving
a system of linear equations.  For instance, in an implementation in
\textsc{Matlab} on a standard laptop, with no attempt at optimization, the
maximizing distributions of $25 \times 25$ matrices were computed in a few
seconds.%
\footnote{We thank Christina Cobbold for carrying out this implementation.}
Second, for certain classes of matrices $\Z$, we can make substantial
improvements in computing time, as observed in
Section~\ref{sec:examples-pos-def}.
\end{remark}

\section{Simple examples}
\label{sec:examples-simple}

The next three sections give examples of the main results, beginning here
with some simple, specific examples.  

\begin{example}
\label{eg:naive}
First consider the naive model $\Z = \I$, in which different species are
deemed to be entirely dissimilar.  As noted in Section~\ref{sec:measures},
$\dvv{q}{\I}(\p)$ is the exponential of the R\'enyi entropy of order $q$.
It is well-known that R\'enyi entropy of any order $q > 0$ is maximized
uniquely by the uniform distribution.  This result also follows trivially
from Corollary~\ref{cor:irrelevance}: for clearly $\dvv{\infty}{\I}(\p) =
1/\max_i p_i$ is uniquely maximized by the uniform distribution, and the
corollary implies that the same is true for all values of $q > 0$.
Moreover, $\Dmax{\I} = \mg{\I} = n$.
\end{example}

\begin{example}
\label{eg:three}
For a general matrix $\Z$ satisfying~\eqref{eq:stronger-sim-hyps}, a
two-species system is always maximized by the uniform distribution $p_1 =
p_2 = 1/2$.  When $n = 3$, however, nontrivial examples arise.  For
instance, take the system shown in Figure~\ref{fig:three}, consisting of
one species of newt and two species of frog.  Let us first consider
intuitively what we expect the maximizing distribution to be, then
compare this with the answer given by Theorem~\ref{thm:comp}.

\begin{figure}
\begin{center}
\setlength{\unitlength}{1mm}
\begin{picture}(120,16)(0,-1)
\cell{0}{7}{l}{%
$\Z 
= 
\begin{pmatrix} 
1       &0.4    &0.4    \\
0.4     &1      &0.9    \\
0.4     &0.9    &1
\end{pmatrix}$}
\thicklines
\put(41,-1.5){\includegraphics[width=80\unitlength]{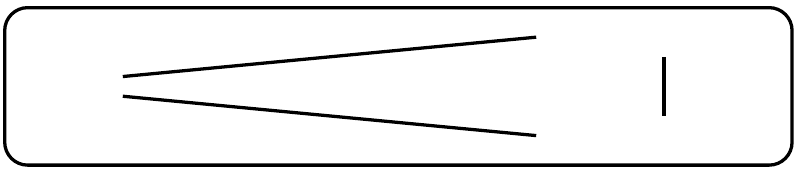}}
\cell{48}{7}{c}{newt}
\cell{117}{12}{r}{frog species A}
\cell{117}{2}{r}{frog species B}
\cell{108.5}{7}{l}{0.9}
\cell{73}{12}{c}{0.4}
\cell{73}{2}{c}{0.4}
\end{picture}%
\end{center}
\caption{Hypothetical three-species system.  Distances between species
  indicate degrees of dissimilarity between them (not to scale).}
\label{fig:three}
\end{figure}

If we ignore the fact that the two frog species are more similar to each
other than they are to the newt, then (as in Example~\ref{eg:naive}) the
maximizing distribution is $(1/3, 1/3, 1/3)$.  At the other extreme, if we
regard the two frog species as essentially identical then effectively there
are only two species, newts and frogs, so the maximizing distribution gives
relative abundance $0.5$ to the newt and $0.5$ to the frogs.  So with this
assumption, we expect diversity to be maximized by the distribution $(0.5,
0.25, 0.25)$.

Intuitively, then, the maximizing distribution should lie between these two
extremes.  And indeed, it does: implementing the algorithm in
Remark~\ref{rmk:algorithm} (or using Proposition~\ref{propn:pos-def-max}
below) reveals that the unique maximizing distribution is $(0.478, 0.261,
0.261)$.
%
%
\end{example}

One of our standing hypotheses on $\Z$ is symmetry.  The last of our simple
examples shows that if $\Z$ is no longer assumed to be symmetric, then the
main theorem fails in every respect.

\begin{example}
\label{eg:nonsym}
Let $\Z = \bigl( \begin{smallmatrix} 1 & 1/2 \\ 0 & 1 \end{smallmatrix}
\bigr)$, which satisfies all of our standing hypotheses except symmetry.  
Consider a distribution $\p = (p_1, p_2) \in \Delta_2$.  If $\p$ is $(1,
0)$ or $(0, 1)$ then $\dqz(\p) = 1$ for all $q$.  Otherwise,
\begin{align}
\dvv{0}{\Z}(\p) &
=
3 - \frac{2}{1 + p_1},  
\label{eq:nonsym-0}
\\
\dvv{2}{\Z}(\p) &
=
\frac{2}{3(p_1 - 1/2)^2 + 5/4}, 
\label{eq:nonsym-2}
\\
\dvv{\infty}{\Z}(\p) &
=
\begin{cases}
1/(1 - p_1)     &
\text{if } p_1 \leq 1/3,        \\
2/(1 + p_1)     &
\text{if } p_1 \geq 1/3.
\end{cases}
\label{eq:nonsym-infty}
\end{align}
From~\eqref{eq:nonsym-0} it follows that $\sup_{\p \in \Delta_2}
\dvv{0}{\Z}(\p) = 2$.  However, this supremum is not attained;
$\dvv{0}{\Z}(\p) \to 2$ as $\p \to (1, 0)$, but $\dvv{0}{\Z}(1, 0) = 1$.
Equations~\eqref{eq:nonsym-2} and~\eqref{eq:nonsym-infty} imply that
\[
\sup_{\p \in \Delta_2} \dvv{2}{\Z}(\p) = 1.6,
\qquad
\sup_{\p \in \Delta_2} \dvv{\infty}{\Z}(\p) = 1.5,
\]
with unique maximizing distributions $(1/2, 0)$ and $(1/3, 2/3)$
respectively.

Thus, when $\Z$ is not symmetric, the main theorem fails comprehensively:
the supremum $\sup_{\p \in \Delta_n} \dvv{0}{\Z}(\p)$ may not be attained;
there may be no distribution maximizing $\sup_{\p \in \Delta_n}
\dvv{q}{\Z}(\p)$ for all $q$ simultaneously; and that supremum may vary
with $q$.
\end{example}

Perhaps surprisingly, nonsymmetric similarity matrices $\Z$ do have
practical uses.  For example, it is shown in \cite[Proposition~A7]{MDISS}
that the mean phylogenetic diversity measures of Chao, Chiu and
Jost~\cite{CCJ} are a special case of the measures $\dqz(\p)$, obtained by
taking a particular $\Z$ depending on the phylogenetic tree concerned.
This $\Z$ is usually nonsymmetric, reflecting the asymmetry of evolutionary
time.  More generally, the case for dropping the symmetry axiom for metric
spaces was made in~\cite{LawvMSG}, and Gromov has argued that symmetry
`unpleasantly limits many applications' \cite[p.xv]{GromMSR}.  So the fact
that our maximization theorem fails for nonsymmetric $\Z$ is an important
restriction.

\section{Maximum diversity on graphs}
\label{sec:examples-graphs}

Consider those matrices $\Z$ for which each similarity coefficient $Z_{ij}$
is either $0$ or $1$.  A matrix $\Z$ of this form amounts to a (finite,
undirected) reflexive graph with vertex-set $\{1, \ldots, n\}$, with an
edge between $i$ and $j$ if and only if $Z_{ij} = 1$.  (That is, $\Z$ is
the \demph{adjacency matrix} of the graph.)  Our standing hypotheses on
$\Z$ then imply that $Z_{ii} = 1$ for all $i$, so every vertex has a loop
on it; this is the meaning of \demph{reflexive}.

What is the maximum diversity of the adjacency matrix of a graph?  

To state the answer, we recall some terminology.  Vertices $x$ and $y$ of a
graph are \demph{adjacent}, written $x \adjt y$, if there is an edge
between them.  (In particular, every vertex of a reflexive graph is
adjacent to itself.)  A set of vertices is \demph{independent} if no two
distinct vertices are adjacent.  The \demph{independence number}
$\alpha(G)$ of a graph $G$ is the maximal cardinality of an independent set
of vertices of $G$.

\begin{propn}
\label{propn:graph-main}
Let $G$ be a reflexive graph with adjacency matrix $\Z$.  Then the maximum
diversity $\Dmax{\Z}$ is equal to the independence number $\alpha(G)$.
\end{propn}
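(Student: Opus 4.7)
The plan is to apply the Computation Theorem (Theorem~\ref{thm:comp}), which reduces the task to evaluating $\max_B \mg{\Z_B}$ over subsets $B \sub \{1, \ldots, n\}$ such that $\Z_B$ admits a nonnegative weighting. Two inequalities then need to be established: $\Dmax{\Z} \geq \alpha(G)$ and $\Dmax{\Z} \leq \alpha(G)$.

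For the lower bound, the natural choice is to take $B$ to be a maximum independent set of $G$. Since the vertices of $B$ are pairwise non-adjacent yet each is adjacent to itself (the graph being reflexive), $\Z_B$ is the $\alpha(G) \times \alpha(G)$ identity matrix. The all-ones column vector is a nonnegative weighting on it, with magnitude $\alpha(G)$, so Theorem~\ref{thm:comp}\bref{part:comp-value} gives $\Dmax{\Z} \geq \alpha(G)$.

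The reverse inequality is the main step. I fix any $B$ together with a nonnegative weighting $\w$ on $\Z_B$ and aim to prove $\mg{\Z_B} = \sum_{i \in B} w_i \leq \alpha(G)$. After replacing $B$ by $\supp(\w)$ (which preserves both the magnitude and the weighting property), I may assume $w_i > 0$ throughout $B$. The weighting equation then reads $\sum_{j \in B,\, j \adjt i} w_j = 1$ for every $i \in B$. Choosing a maximum independent set $I$ of the induced subgraph $G[B]$ and summing this equation over $i \in I$, a swap of summations yields
\[
|I| = \sum_{j \in B} \bigl|\{i \in I : i \adjt j\}\bigr|\, w_j \geq \sum_{j \in B} w_j = \mg{\Z_B},
\]
where the inequality uses that every $j \in B$ is adjacent to at least one $i \in I$: when $j \in I$ this holds by reflexivity (take $i = j$), and when $j \notin I$ it holds by maximality of $I$ in $G[B]$. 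Since $I$ is also an independent set of $G$, $|I| \leq \alpha(G)$, giving $\mg{\Z_B} \leq \alpha(G)$ as desired.

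There is no serious obstacle here; the only points needing care are the reduction to a strictly positive weighting (so that restricting to $\supp(\w)$ gives a genuine weighting on the submatrix with the same magnitude) and the observation that reflexivity of $G$ is exactly what makes the bracketed count at least $1$ for $j \in I$. Combining both inequalities yields $\Dmax{\Z} = \alpha(G)$.
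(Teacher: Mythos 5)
Your proof is correct, but it takes a different route from the paper's. The paper never invokes the Computation Theorem here: it works directly with distributions, showing that $\sup_{\p}\dvv{\infty}{\Z}(\p)=\alpha(G)$ and then appealing to Theorem~\ref{thm:main} to conclude that the common value $\Dmax{\Z}$ equals $\alpha(G)$. Its upper bound is the "dual" of yours: for an arbitrary $\p$ it picks a maximal independent set $B\sub\supp(\p)$, notes that every vertex of $\supp(\p)$ is adjacent to some vertex of $B$, and deduces from $\sum_{i\in B}\sum_{j\csuch \adjto{i}{j}}p_j\geq 1$ that some $i$ has $(\Z\p)_i\geq 1/\alpha(G)$. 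You instead reduce via Theorem~\ref{thm:comp} to bounding $\mg{\Z_B}$ for every $B$ admitting a nonnegative weighting, and sum the weighting equations over a maximal independent set of $G[B]$; the combinatorial kernel (a maximal independent set inside the support dominates the whole support, with reflexivity handling the vertices of the set itself) is the same in both arguments, and your incidental reduction to $\supp(\w)$ is harmless but not actually needed, since the domination argument already weights each $j$ by $w_j\geq 0$. What the paper's route buys is the methodological point it goes on to draw out in Remark~\ref{rmk:graph-proof} and afterwards: maximizing the easy order-$\infty$ diversity suffices to maximize all orders. What your route buys is a purely linear-algebraic/combinatorial statement about weightings of adjacency matrices --- namely that $\mg{\Z_B}\leq\alpha(G)$ for every feasible $B$ --- which identifies exactly which subsets can attain the maximum and meshes directly with Theorem~\ref{thm:comp}\bref{part:comp-dist} to describe the maximizing distributions.
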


\begin{proof}
We will maximize the diversity of order $\infty$ and apply
Theorem~\ref{thm:main}.  For any probability distribution $\p$ on the
vertex-set $\{1, \ldots, n\}$, we have 
\begin{equation}
\label{eq:graph-infty}
\dvv{\infty}{\Z}(\p)
=
1 \Bigl/
\max_{i \in \supp(\p)} \sum_{j \csuch \adjto{i}{j}} p_j.
\end{equation}

First we show that $\Dmax{\Z} \geq \alpha(G)$.  Choose an independent set
$B$ of maximal cardinality, and define $\p \in \Delta_n$ by
\[
p_i
=
\begin{cases}
1/\alpha(G)     &\text{if } i \in B,    \\
0               &\text{otherwise}.
\end{cases}
\]
For each $i \in \supp(\p)$, the sum on the right-hand side of
equation~\eqref{eq:graph-infty} is $1/\alpha(G)$.  Hence
$\dvv{\infty}{\Z}(\p) = \alpha(G)$, and so $\alpha(G) \leq \Dmax{\Z}$.

Now we show that $\Dmax{\Z} \leq \alpha(G)$.  Let $\p \in \Delta_n$.
Choose an independent set $B \sub \supp(\p)$ with maximal cardinality among
all independent subsets of $\supp(\p)$.  Then every vertex of $\supp(\p)$
is adjacent to at least one vertex in $B$, otherwise we could adjoin it to
$B$ to make a larger independent subset.  Hence
\[
\sum_{\,i \in B\,} \sum_{j \csuch \adjto{i}{j}} p_j
=
\sum_{\,i \in B\,} \sum_{j \in \supp(\p) \csuch \adjto{i}{j}} p_j
\geq
\sum_{j \in \supp(\p)} p_j
=
1.
\]
So there exists $i \in B$ such that $\sum_{j \csuch \adjto{i}{j}} p_j \geq
1/\# B$, where $\# B$ denotes the cardinality of $B$.  But $\# B \leq
\alpha(G)$, so
\[
\max_{i \in \supp(\p)} \sum_{j \csuch \adjto{i}{j}} p_j 
\geq
1/\alpha(G),
\]
as required.
\end{proof}

\begin{remark}
\label{rmk:graph-proof}
The first part of the proof (together with Corollary~\ref{cor:irrelevance})
shows that a maximizing distribution can be constructed by taking the
uniform distribution on some independent set of largest cardinality, then
extending by zero to the whole vertex-set.  Except in the trivial case $\Z
= \I$, this maximizing distribution never has full support.  We return to
this point in Section~\ref{sec:elim}.
\end{remark}

\begin{example}
\label{eg:graph-lin3}
The reflexive graph $G =
\mbox{\ensuremath{\bullet\gedge\bullet\gedge\bullet}}$ (loops not shown) has
adjacency matrix $\Z = \Bigl( \begin{smallmatrix} 1 &1 &0 \\ 1& 1& 1 \\ 0&
  1& 1 \end{smallmatrix} \Bigr)$.  The independence number of $G$ is $2$;
this, then, is the maximum diversity of $\Z$.  There is a unique
independent set of cardinality $2$, and a unique maximizing distribution,
$(1/2, 0, 1/2)$.
\end{example}

\begin{example}
\label{eg:graph-lin4}
The reflexive graph
$\mbox{\ensuremath{\bullet\gedge\bullet\gedge\bullet\gedge\bullet}}$ again has
independence number $2$.  There are three independent sets of maximal
cardinality, so by Remark~\ref{rmk:graph-proof}, there are at least three
maximizing distributions,
\[
(1/2, 0, 1/2, 0),
\qquad
(1/2, 0, 0, 1/2),
\qquad
(0, 1/2, 0, 1/2),
\]
all with different supports.  (The possibility of multiple maximizing
distributions was also observed in the case $q = 2$ by Pavoine and
Bonsall~\cite{PaBo}.)
In fact, there are further maximizing distributions not constructed in the
proof of Proposition~\ref{propn:graph-main}, namely, $(1/2, 0, t, 1/2 - t)$
and $(1/2 - t, t, 0, 1/2)$ for any $t \in (0, 1/2)$.
\end{example}

\begin{example}
Let $d$ be a metric on $\{1, \dots, n\}$.  For a given $\epsln > 0$, the
\demph{covering number} $N(d, \epsln)$ is the minimum
cardinality of a subset $A \sub \{1, \dots, n\}$ such that
\[
\bigcup_{i \in A} B(i, \epsln) = \{1, \dots, n\},
\]
where $B(i,\epsln) = \{j \such d(i,j) \leq \epsln\}$.  The number $\log
N(d,\epsln)$ is known as the \demph{$\epsln$-entropy} of $d$~\cite{Kolm}.

Now define a matrix $\Z^\epsln$ by
\[
Z^\epsln_{ij} 
= 
\begin{cases} 
1 	& \text{if } d(i,j) \leq \epsln,	\\
0 	& \text{otherwise.} 
\end{cases}
\]
Then $\Z^\epsln$ is the adjacency matrix of the reflexive graph $G$ with
vertices $\{1, \dots, n\}$ and $i \sim j$ if and only if $d(i, j) \leq
\epsln$.  Thus, a subset of $B \sub \{1, \dots, n\}$ is independent in $G$
if and only if $d(i, j) > \epsln$ for every $i, j \in B$.  It is a
consequence of the triangle inequality that
\[
N(d, \epsln) \leq \alpha(G) \leq N(d, \epsln/2),
\]
and so by Proposition~\ref{propn:graph-main},
\[
N(d, \epsln) \leq \Dmax{\Z^\epsln} \leq N(d, \epsln/2).
\]
Recalling that $\log \dqz$ extends the classical notion of R\'enyi entropy,
this thoroughly justifies the name of $\epsln$-entropy (which was
originally justified by vague analogy).
\end{example}

The moral of the proof of Proposition~\ref{propn:graph-main} is that by
performing the simple task of maximizing diversity of order $\infty$, we
automatically maximize diversity of all other orders.  Here is an example
of how this can be exploited.

Recall that every graph $G$ has a \demph{complement} $\gcomp{G}$, with the
same vertex-set as $G$; two vertices are adjacent in $\gcomp{G}$ if and
only if they are not adjacent in $G$.  Thus, the complement of a reflexive
graph is irreflexive (has no loops), and vice versa.  A set $B$ of vertices
in an irreflexive graph $X$ is a \demph{clique} if all pairs of distinct
elements of $B$ are adjacent in $X$.  The \demph{clique number}
$\cliqueno{X}$ of $X$ is the maximal cardinality of a clique in $X$.  Thus,
$\cliqueno{X} = \alpha(\gcomp{X})$.

We now recover a result of Berarducci, Majer and Novaga
\cite[Proposition~5.10]{BMN}.

\begin{cor}
\label{cor:capacity}
Let $X$ be an irreflexive graph.  Then
\[
\sup_\p \sum_{(i, j) \csuch \adjto{i}{j}} p_i p_j
=
1 - \frac{1}{\cliqueno{X}}
\]
where the supremum is over probability distributions $\p$ on the vertex-set
of $X$, and the sum is over pairs of adjacent vertices of $X$.
\end{cor}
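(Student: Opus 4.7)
The plan is to rewrite the sum on the left-hand side as $1 - 1/\dvv{2}{\Z}(\p)$, where $\Z$ is the adjacency matrix of the reflexive graph $\gcomp{X}$, and then apply Proposition~\ref{propn:graph-main}.

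First I would note that since $X$ is irreflexive, its complement $\gcomp{X}$ is reflexive, so the adjacency matrix $\Z$ of $\gcomp{X}$ satisfies our standing hypotheses (symmetric, nonnegative, with strictly positive diagonal entries equal to $1$). The key combinatorial observation is that every ordered pair $(i,j)$ of vertices is adjacent in exactly one of $X$ and $\gcomp{X}$: the loops $(i,i)$ occur only in $\gcomp{X}$, while for $i \neq j$ this is simply the definition of the complement. Hence for every $\p \in \Delta_n$,
\[
\sum_{(i,j) \csuch \adjto{i}{j} \text{ in } X} p_i p_j
= 1 - \sum_{(i,j) \csuch \adjto{i}{j} \text{ in } \gcomp{X}} p_i p_j
= 1 - \p^\transp \Z \p
= 1 - \frac{1}{\dvv{2}{\Z}(\p)},
\]
using the identity $\dvv{2}{\Z}(\p) = 1\big/\sum_{i,j} p_i Z_{ij} p_j$ recorded in Section~\ref{sec:measures}.

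Taking the supremum of both sides over $\p \in \Delta_n$ converts the right-hand side into $1 - 1/\Dmax{\Z}$, since $\sup_\p \dvv{2}{\Z}(\p) = \Dmax{\Z}$ by Theorem~\ref{thm:main}. Proposition~\ref{propn:graph-main} identifies $\Dmax{\Z}$ with $\alpha(\gcomp{X})$, and the remark just before the corollary records the elementary identity $\alpha(\gcomp{X}) = \cliqueno{X}$. Chaining these gives precisely $1 - 1/\cliqueno{X}$, completing the proof. The argument is essentially bookkeeping, so I do not anticipate any genuine obstacle; Proposition~\ref{propn:graph-main} does all the real work once the sum has been recognized as $1 - 1/\dvv{2}{\Z}(\p)$.
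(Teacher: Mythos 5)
Your proposal is correct and follows essentially the same route as the paper's proof: decompose $\sum_{i,j} p_i p_j = 1$ over the adjacencies of $X$ and $\gcomp{X}$, recognize the complementary sum as $1/\dvv{2}{\Z}(\p)$, and invoke Theorem~\ref{thm:main} together with Proposition~\ref{propn:graph-main} and the identity $\alpha(\gcomp{X}) = \cliqueno{X}$. No issues.
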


\begin{proof}
Write $\{1, \ldots, n\}$ for the vertex-set of $X$, and $\Z$ for the
adjacency matrix of the reflexive graph $\gcomp{X}$.  Then for all $\p \in
\Delta_n$, 
\begin{align*}
\sum_{(i, j) \csuch \adjtoin{i}{j}{X}} p_i p_j        &
=
\sum_{i, j = 1}^n p_i p_j - 
\sum_{(i, j) \csuch \adjtoin{i}{j}{\gcomp{X}}} p_i p_j  \\
&
=
1 - \sum_{i, j = 1}^n p_i Z_{ij} p_j    
=
1 - 1\big/\dvv{2}{\Z}(\p).
\end{align*}%
Hence by Theorem~\ref{thm:main} and Proposition~\ref{propn:graph-main},
\[
\sup_{\p \in \Delta_n} \sum_{(i, j) \csuch \adjtoin{i}{j}{X}} p_i p_j    
=
1 - \frac{1}{\Dmax{\p}}
= 
1 - \frac{1}{\alpha(\gcomp{X})}
=
1 - \frac{1}{\cliqueno{X}}.
\]
\end{proof}

It follows from this proof and Remark~\ref{rmk:graph-proof} that $\sum_{(i,
  j)\csuch \adjto{i}{j}} p_i p_j$ can be maximized as follows: take the
uniform distribution on some clique in $X$ of maximal cardinality, then
extend by zero to the whole vertex-set.

\begin{remark}
\label{rmk:no-quick-clique}
Proposition~\ref{propn:graph-main} implies that computationally, finding
the maximum diversity of an arbitrary $\Z$ is at least as hard as finding
the independence number of a reflexive graph.  This is a very well-studied
problem, usually presented in its dual form (find the clique number of an
irreflexive graph) and called the maximum clique problem~\cite{Karp}.  It
is $\mathbf{NP}$-hard, so on the assumption that $\mathbf{P} \neq
\mathbf{NP}$, there is no polynomial-time algorithm for computing maximum
diversity, nor even for computing the support of a maximizing distribution.
\end{remark}

\section{Positive definite similarity matrices}
\label{sec:examples-pos-def}

The theory of magnitude of metric spaces runs most smoothly when the
matrices $\Z$ concerned are positive definite~\cite{MeckPDM,MeckMDC}.  We
will see that positive (semi)definiteness is also an important condition
when maximizing diversity.

Any positive definite matrix is invertible and therefore has a unique
weighting.  (A positive semidefinite matrix need not have a weighting at
all.)  Now the crucial fact about magnitude is:

\begin{lemma}
\label{lemma:pos-def-variational}
Let $\M$ be a positive semidefinite $n \times n$ real matrix admitting a
weighting.  Then  
\[
\mg{\M}
=
\sup_{\x \in \R^n \csuch \x^\transp \M \x \neq 0} 
\frac{\bigl( \sum_{i = 1}^n x_i \bigr)^2}{\x^\transp \M \x}
> 
0.
\]
If $\M$ is positive definite then the supremum is attained by exactly the 
nonzero scalar multiples $\x$ of the unique weighting on $\M$.
\end{lemma}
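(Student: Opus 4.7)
The plan hinges on two basic identities involving a weighting $\w$ on $\M$, followed by Cauchy--Schwarz for the (semi)inner product defined by $\M$.

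First I would record the identity $\mg{\M} = \w^\transp \M \w$. By definition $\mg{\M} = \sum_i w_i = \mathbf{1}^\transp \w$, and since $\M\w = \mathbf{1}$ and $\M$ is symmetric, $\mathbf{1}^\transp \w = (\M\w)^\transp \w = \w^\transp \M \w$. Combined with positive semidefiniteness this gives $\mg{\M} \geq 0$. To upgrade this to strict positivity, I would use the standard fact that for a positive semidefinite $\M$, $\w^\transp \M \w = 0$ forces $\M \w = \vecstyle{0}$ (the kernel of the quadratic form coincides with the kernel of the matrix). Since $\M \w = \mathbf{1} \neq \vecstyle{0}$, we conclude $\mg{\M} = \w^\transp \M \w > 0$.

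Next, for any $\x \in \R^n$, I would rewrite the numerator using the weighting:
\[
\sum_{i=1}^n x_i = \mathbf{1}^\transp \x = (\M\w)^\transp \x = \w^\transp \M \x.
\]
Since the bilinear form $(\vecstyle{u},\vecstyle{v}) \mapsto \vecstyle{u}^\transp \M \vecstyle{v}$ is positive semidefinite, the Cauchy--Schwarz inequality yields
\[
\Bigl( \sum_i x_i \Bigr)^2 = (\w^\transp \M \x)^2 \leq (\w^\transp \M \w)(\x^\transp \M \x) = \mg{\M} \cdot (\x^\transp \M \x).
\]
Dividing by $\x^\transp \M \x > 0$ (when this quantity is nonzero) gives the upper bound $\mg{\M}$ for the ratio. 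Plugging $\x = \w$ into the ratio produces $\mg{\M}^2/\mg{\M} = \mg{\M}$, so the supremum is attained and equals $\mg{\M}$.

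Finally, for the equality case when $\M$ is positive definite, $\M$ is an honest inner product, so Cauchy--Schwarz is strict unless $\x$ and $\w$ are proportional; and scaling $\x$ by a nonzero constant leaves the ratio invariant. Hence the supremum is attained precisely on the nonzero scalar multiples of $\w$. The one mild subtlety to keep in mind is the positive semidefinite case, where Cauchy--Schwarz for a merely semidefinite form must be invoked (the cleanest way is to expand $(t\w - s\x)^\transp \M (t\w - s\x) \geq 0$ and optimize in the scalars $s,t$); but this is standard and I do not anticipate it being a real obstacle.
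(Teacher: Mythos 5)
Your proposal is correct and follows essentially the same route as the paper: rewrite $\sum_i x_i = \w^\transp \M \x$ using the weighting, apply Cauchy--Schwarz for the positive semidefinite bilinear form $(\vecstyle{u},\vecstyle{v}) \mapsto \vecstyle{u}^\transp \M \vecstyle{v}$, and note that equality holds exactly at scalar multiples of $\w$ in the positive definite case. The only cosmetic difference is the positivity step: you invoke the kernel fact that $\w^\transp \M \w = 0$ would force $\M\w = \vecstyle{0}$, whereas the paper simply substitutes $\x = (1,0,\ldots,0)^\transp$ into the Cauchy--Schwarz inequality; both are fine.
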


\begin{proof}
This is a small extension of \cite[Proposition~2.4.3]{MMS}.  Choose a
weighting $\w$ on $\M$.  By the Cauchy--Schwarz inequality,
\[
(\x^\transp \M \w)^2 \leq (\x^\transp \M \x)(\w^\transp \M \w),
\]
or equivalently
\begin{equation}
\label{eq:cs-mult}
\Bigl( \sum x_i \Bigr)^2 \leq (\x^\transp \M \x) \mg{\M},
\end{equation}
for all $\x \in \R^n$.  Equality holds when $\x$ is a scalar multiple of
$\w$, and if $\M$ is positive definite, it holds only then.  Finally,
taking $\x = (1, 0, \ldots, 0)^\transp$ in~\eqref{eq:cs-mult} and using
positive semidefiniteness gives $\mg{\M} > 0$.
\end{proof}

From this, we deduce:

\begin{lemma}
\label{lemma:pos-def-sub}
Let $B \subsetneqq \{1, \ldots, n\}$.  If $\Z$ is positive semidefinite and
both $\Z$ and $\Z_B$ admit a weighting, then $\mg{\Z_B} \leq \mg{\Z}$.
Moreover, if $\Z$ is positive definite and the unique weighting on $\Z$ has
full support, then $\mg{\Z_B} < \mg{\Z}$.
\end{lemma}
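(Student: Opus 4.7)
The plan is to apply the variational characterization of magnitude (Lemma~\ref{lemma:pos-def-variational}) to $\Z$, using a carefully chosen test vector built from a weighting on $\Z_B$. Specifically, let $\w_B$ be a weighting on $\Z_B$, and let $\tilde{\w} \in \R^n$ denote its extension by zero outside $B$ (that is, $\tilde{w}_i = (\w_B)_i$ for $i \in B$ and $\tilde{w}_i = 0$ otherwise).

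The key computation is to evaluate the numerator and denominator of the variational expression at $\tilde{\w}$. Since $\tilde{\w}$ is supported on $B$, we have $\tilde{\w}^\transp \Z \tilde{\w} = \w_B^\transp \Z_B \w_B$; using the weighting equation $\Z_B \w_B = \mathbf{1}$ this reduces to $\sum_{i \in B} (\w_B)_i = \mg{\Z_B}$. Similarly $\sum_{i=1}^n \tilde{w}_i = \mg{\Z_B}$. By Lemma~\ref{lemma:pos-def-variational} applied to $\Z_B$ (which is itself positive semidefinite as a principal submatrix), $\mg{\Z_B} > 0$, so $\tilde{\w}^\transp \Z \tilde{\w} \neq 0$ and $\tilde{\w} \neq \mathbf{0}$. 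Therefore Lemma~\ref{lemma:pos-def-variational} for $\Z$ gives
\[
\mg{\Z} \;\geq\; \frac{\bigl(\sum_{i=1}^n \tilde{w}_i\bigr)^2}{\tilde{\w}^\transp \Z \tilde{\w}} \;=\; \frac{\mg{\Z_B}^2}{\mg{\Z_B}} \;=\; \mg{\Z_B},
\]
which is the first assertion.

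For the strict inequality under the stronger hypothesis, I use the uniqueness clause of Lemma~\ref{lemma:pos-def-variational}: when $\Z$ is positive definite, equality in the displayed inequality requires $\tilde{\w}$ to be a nonzero scalar multiple of the unique weighting $\w$ on $\Z$. But $\w$ has full support by assumption, while $\tilde{\w}$ vanishes on $\{1,\ldots,n\}\setminus B$, which is nonempty since $B \subsetneqq \{1,\ldots,n\}$; no nonzero scalar multiple of $\w$ can have this property. Hence the inequality is strict.

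No step looks genuinely hard: the only thing to be a little careful about is invoking Lemma~\ref{lemma:pos-def-variational} for $\Z_B$ to guarantee $\mg{\Z_B} > 0$ (so that we are not dividing by zero in the variational formula, and so that $\tilde{\w}$ is nonzero for the rigidity argument). This requires noting that $\Z_B$ inherits positive semidefiniteness from $\Z$ and, by hypothesis, admits a weighting.
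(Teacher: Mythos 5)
Your proof is correct and is essentially the paper's own argument: extend a weighting on $\Z_B$ by zero to a test vector in $\R^n$, feed it into the variational characterization of Lemma~\ref{lemma:pos-def-variational}, and use the rigidity clause (equality only at scalar multiples of the unique weighting) to get the strict inequality when $\Z$ is positive definite with full-support weighting. The only detail the paper adds is a one-line disposal of the degenerate case $B = \emptyset$, where your test vector would be zero and $\mg{\Z_\emptyset} = 0$, so both assertions hold trivially since $\mg{\Z} > 0$.
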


\begin{proof}
The first statement follows from Lemma~\ref{lemma:pos-def-variational} and
the fact that $\Z_B$ is positive semidefinite.  The second is trivial if $B
= \emptyset$.  Assuming not, let $\vecstyle{y} \in \R^B$ be the unique
weighting on $\Z_B$ (which is positive definite), and write $\x \in \R^n$
for the extension of $\vecstyle{y}$ by zero to $\{1, \ldots, n\}$.  Then
$\vecstyle{y} \neq \vecstyle{0}$, $\x \neq \vecstyle{0}$, and
\[
\mg{\Z_B}
=
\frac{\bigl(\sum_{i \in B} y_i\bigr)^2}%
{\vecstyle{y}^\transp \Z_B \vecstyle{y}}
=
\frac{\bigl(\sum_{i = 1}^n x_i\bigr)^2}%
{\x^\transp \Z \x}.
\]
But $\x$ does not have full support, so by hypothesis, it is not a scalar
multiple of the unique weighting on $\Z$.  Hence by
Lemma~\ref{lemma:pos-def-variational}, $(\sum x_i)^2/\x^\transp \Z \x <
\mg{\Z}$.
\end{proof}

We now apply this result on magnitude to the maximization of diversity.

\begin{propn}
\label{propn:pos-def-max}
Suppose that $\Z$ is positive semidefinite.  If $\Z$ has a nonnegative
weighting $\w$, then $\Dmax{\Z} = \mg{\Z}$ and $\w/\mg{\Z}$ is a maximizing
distribution.  Moreover, if $\Z$ is positive definite and its unique
weighting $\w$ is positive then $\w/\mg{\Z}$ is the unique maximizing
distribution.
\end{propn}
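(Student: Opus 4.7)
The plan is to combine the computation theorem with Lemma~\ref{lemma:pos-def-sub}. Theorem~\ref{thm:comp} says $\Dmax{\Z} = \max_B \mg{\Z_B}$, taken over all $B \sub \{1, \ldots, n\}$ such that $\Z_B$ has a nonnegative weighting, and that every maximizing distribution is the normalization $\pext{\w}$ of some such weighting on an extremal $\Z_B$. So the task reduces to comparing $\mg{\Z_B}$ with $\mg{\Z}$.

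For the first assertion, I would use the standard fact that every principal submatrix of a positive semidefinite matrix is itself positive semidefinite. Hence whenever $\Z_B$ admits a weighting, Lemma~\ref{lemma:pos-def-sub} applies and yields $\mg{\Z_B} \leq \mg{\Z}$. Since $\Z$ itself has a nonnegative weighting $\w$ by hypothesis, $B = \{1, \ldots, n\}$ is feasible and achieves equality. Thus $\Dmax{\Z} = \mg{\Z}$, and by Theorem~\ref{thm:comp}\bref{part:comp-dist} the distribution $\pext{\w} = \w/\mg{\Z}$ is maximizing. (Note $\mg{\Z} > 0$ by Lemma~\ref{lemma:pos-def-variational}, so the division is legitimate.)

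For the uniqueness claim, positive definiteness of $\Z$ plus positivity of the unique weighting $\w$ triggers the strict version of Lemma~\ref{lemma:pos-def-sub}: for every proper subset $B \subsetneqq \{1, \ldots, n\}$ admitting a weighting, $\mg{\Z_B} < \mg{\Z}$. So the only $B$ attaining the maximum in Theorem~\ref{thm:comp}\bref{part:comp-value} is $B = \{1, \ldots, n\}$, and since $\Z$ is invertible the only weighting there is $\w$. Theorem~\ref{thm:comp}\bref{part:comp-dist} then forces $\pext{\w} = \w/\mg{\Z}$ to be the unique maximizing distribution.

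There is no serious obstacle: the proposition essentially repackages Theorem~\ref{thm:comp} and Lemma~\ref{lemma:pos-def-sub}. The only point requiring any care is explicitly invoking the fact that principal submatrices inherit positive semidefiniteness, so that Lemma~\ref{lemma:pos-def-sub} applies uniformly to every feasible $B$ rather than only to $\Z$ itself.
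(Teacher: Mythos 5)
Your proof is correct and follows exactly the paper's route: the paper's entire proof of Proposition~\ref{propn:pos-def-max} is the single line ``This follows from Theorem~\ref{thm:comp} and Lemma~\ref{lemma:pos-def-sub}.'' You have simply filled in the details that the paper leaves implicit, including the (correct) observation that principal submatrices inherit positive semidefiniteness so that Lemma~\ref{lemma:pos-def-sub} applies to every feasible $B$.
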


\begin{proof}
This follows from Theorem~\ref{thm:comp} and
Lemma~\ref{lemma:pos-def-sub}. 
\end{proof}

In particular, if $\Z$ is positive semidefinite and has a nonnegative
weighting, then its maximum diversity can be computed in polynomial time.

\begin{cor}
\label{cor:pos-def-support}
If $\Z$ is positive definite with positive weighting, then its unique
maximizing distribution has full support.  \qed
\end{cor}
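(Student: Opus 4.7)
The plan is to read off the conclusion directly from Proposition~\ref{propn:pos-def-max}, which has just been proved and which already does the substantive work. The positive definiteness hypothesis together with positivity of the weighting $\w$ are exactly the conditions needed to invoke the ``moreover'' clause of that proposition, which identifies $\w/\mg{\Z}$ as the \emph{unique} maximizing distribution.

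Concretely, I would argue as follows. Since $\Z$ is positive semidefinite and admits a nonnegative weighting (namely the positive weighting $\w$), Lemma~\ref{lemma:pos-def-variational} gives $\mg{\Z} > 0$, so division by $\mg{\Z}$ makes sense. Proposition~\ref{propn:pos-def-max} then asserts that $\w/\mg{\Z}$ is the unique maximizing distribution on $\{1, \ldots, n\}$. Because each coordinate $w_i$ is strictly positive by hypothesis and $\mg{\Z} > 0$, every entry of $\w/\mg{\Z}$ is strictly positive, so $\supp(\w/\mg{\Z}) = \{1, \ldots, n\}$, i.e., this distribution has full support.

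There is essentially no obstacle: the corollary is a one-line consequence of the preceding proposition once one notes that positivity of $\w$ transfers to positivity of $\w/\mg{\Z}$. The only thing to be careful about is confirming $\mg{\Z} > 0$ so that the normalization is legitimate, but this is handled by Lemma~\ref{lemma:pos-def-variational}.
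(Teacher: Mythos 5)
Your proof is correct and matches the paper's reasoning: the corollary carries a \qed precisely because it is the immediate consequence of the ``moreover'' clause of Proposition~\ref{propn:pos-def-max} that you describe, with positivity of $\w$ and $\mg{\Z}>0$ giving full support of $\w/\mg{\Z}$.
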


In other words, when $\Z$ has these properties, its maximizing distribution
eliminates no species.  Here are three classes of such matrices $\Z$.

\begin{example}
\label{eg:ultra}
Call $\Z$ \demph{ultrametric} if $Z_{ik} \geq \min \{Z_{ij}, Z_{jk}\}$ for
all $i, j, k$ and $Z_{ii} > \max_{j \neq k} Z_{jk}$ for all $i$.  (Under
the assumptions~\eqref{eq:stronger-sim-hyps} on $\Z$, the latter condition
just states that distinct species are not completely similar.)  If $\Z$ is
ultrametric then $\Z$ is positive definite with positive weighting, by
\cite[Proposition~2.4.18]{MMS}.

Such matrices arise in practice: for instance, $\Z$ is ultrametric if it is
defined from a phylogenetic or taxonomic tree as in
Examples~\ref{eg:sim-phylo} and~\ref{eg:sim-taxo}.
\end{example}

\begin{example}
Let $\rvec \in \Delta_n$ be a probability distribution of full support,
and write $\Z$ for the diagonal matrix with entries $1/r_1, \ldots, 1/r_n$.
Then for $0 < q < \infty$,
\[
-\log \dqz(\p) 
=
\begin{cases}
\frac{1}{q - 1} \log \sum_{i \in \supp(\p)} p_i^q r_i^{1 - q}        &
\text{if } q \neq 1,    \\
\sum_{i \in \supp(\p)} p_i \log (p_i/r_i)       &
\text{if } q = 1.
\end{cases}
\]
The right-hand side is the \demph{R\'enyi relative entropy} or
\demph{R\'enyi divergence} $\rre{q}{\p}{\rvec}$ \cite[Section~3]{Reny}.
Evidently $\Z$ is positive definite, and its unique weighting $\rvec$ is
positive.  (In fact, $\Z$ is ultrametric.)  So
Proposition~\ref{propn:pos-def-max} applies; in fact, it gives the
classical result that $\rre{q}{\p}{\rvec} \geq 0$ with equality if and only
if $\p = \rvec$.
\end{example}

\begin{example}
\label{eg:id-pos-def}
The identity matrix $\Z = \I$ is certainly positive definite with positive
weighting.  By topological arguments, there is a neighbourhood $U$ of $\I$
in the space of symmetric matrices such that every matrix in $U$ also has
these properties.  (See the proofs of \cite[Propositions~2.2.6
  and~2.4.6]{MMS}.)  Quantitative versions of this result are also
available.  For instance, in~\cite[Proposition~2.4.17]{MMS} it was shown
that $\Z$ is positive definite with positive weighting if $Z_{ii} = 1$ for
all $i$ and $Z_{ij} < 1/(n - 1)$ for all $i \neq j$.  In fact, this result
can be improved:
\end{example}

\begin{propn}
Suppose that $Z_{ii} = 1$ for all $i, j$ and that $\Z$ is \demph{strictly
  diagonally dominant} (that is, $Z_{ii} > \sum_{j \neq i} Z_{ij}$ for all
$i$).  Then $\Z$ is positive definite with positive weighting.
\end{propn}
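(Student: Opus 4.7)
The plan is to prove positive definiteness and positivity of the weighting as two separate steps: the first is immediate from Gershgorin's disk theorem, and the second I would establish by a continuity argument deforming $\Z$ back to the identity.

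For positive definiteness, I would invoke Gershgorin: every eigenvalue $\lambda$ of $\Z$ satisfies $|\lambda - Z_{ii}| \leq \sum_{j \neq i} |Z_{ij}|$ for some $i$. Since the off-diagonal entries of $\Z$ are nonnegative and $Z_{ii} = 1$, strict diagonal dominance gives $|\lambda - 1| < 1$, so $\lambda > 0$. Symmetry forces all eigenvalues to be real, hence $\Z$ is positive definite; in particular $\Z$ is invertible and has a unique weighting $\w = \Z^{-1}\vecstyle{1}$, where $\vecstyle{1}$ denotes the all-ones column vector.

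For positivity of $\w$, I would consider the continuous family $\Z(t) = (1-t)\I + t\Z$ for $t \in [0,1]$, connecting $\I$ to $\Z$. Each $\Z(t)$ has $\Z(t)_{ii} = 1$, nonnegative off-diagonals, and $\sum_{j \neq i}\Z(t)_{ij} = t\sum_{j \neq i}Z_{ij} < 1$, so the previous step shows it is positive definite. Let $\w(t) = \Z(t)^{-1}\vecstyle{1}$; this is continuous in $t$ with $\w(0) = \vecstyle{1} > 0$. Suppose for contradiction that $\w(1)$ is not strictly positive; then setting $t^* = \inf\{t \in [0,1] : \min_j w_j(t) \leq 0\}$, closedness and continuity yield $t^* \in (0,1]$ with $\w(t^*) \geq 0$ componentwise but $w_i(t^*) = 0$ for some $i$. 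Row $i$ of $\Z(t^*)\w(t^*) = \vecstyle{1}$ reads $\sum_{j \neq i} Z(t^*)_{ij}w_j(t^*) = 1$, and combined with $w_j(t^*) \leq M := \max_j w_j(t^*)$ and $\sum_{j \neq i} Z(t^*)_{ij} < 1$ this forces $M > 1$. But the row indexed by any $k$ with $w_k(t^*) = M$ reads $M + \sum_{j \neq k}Z(t^*)_{kj}w_j(t^*) = 1$ with a nonnegative second summand, so $M \leq 1$, contradicting $M > 1$.

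The main obstacle is the positivity claim rather than positive definiteness. Matrices $\Z$ with nonnegative off-diagonal entries have no $M$-matrix structure, so $\Z^{-1}$ can have entries of either sign, and the Neumann series $\Z^{-1} = \sum_{k \geq 0}(-N)^k$ with $N = \Z - \I$ produces alternating terms that are hard to sign-analyze. A direct algebraic attempt to show $w_i > 0$ by picking out an extremal index (say one where $w_i$ is minimal) also stalls, because negative coordinates of $\w$ would compete with positive ones in every row equation. The deformation argument sidesteps both difficulties by reducing the problem to the boundary configuration where $\w(t^*)$ is nonnegative with a zero coordinate, at which point the bounds derived from rows $i$ and $k$ become incompatible.
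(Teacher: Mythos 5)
Your proof is correct, and its first half coincides with the paper's: both use Gershgorin to place the spectrum of $\Z$ in $(0,2)$ and conclude positive definiteness. For positivity of the weighting you take a genuinely different route — and, somewhat ironically, the approach you set aside as unworkable is precisely the one the paper carries out. The paper expands $\Z^{-1} = \sum_{k \geq 0}(\I - \Z)^k$ (convergent because Gershgorin confines the spectrum of $\I - \Z$ to $(-1,1)$) and defuses the alternating signs by pairing consecutive terms: $(\I-\Z)^{2k} + (\I-\Z)^{2k+1} = (\Z-\I)^{2k}(2\I-\Z)$, where $(\Z-\I)^{2k}$ is entrywise nonnegative and $(2\I-\Z)\e$ is entrywise positive by strict diagonal dominance. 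This yields the explicit entrywise bound $\w \geq (2\I - \Z)\e > 0$. Your alternative — deforming along $\Z(t) = (1-t)\I + t\Z$, locating the first parameter $t^*$ at which the weighting meets the boundary of the positive orthant, and deriving the incompatible bounds $M > 1$ and $M \leq 1$ from the row equations at a vanishing coordinate and at a maximal coordinate — is complete and correct; each $\Z(t)$ does inherit the hypotheses, and the extremal-index argument closes cleanly. What the two approaches buy is different: the paper's is constructive and gives a quantitative lower bound on $\w$, whereas yours is softer but relies only on the row equations $\Z\w = \e$ and the sign pattern of $\Z$, so it would survive in settings where one has no convergent Neumann expansion to hand.
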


\begin{proof}
Since $\Z$ is real symmetric, it is diagonalizable with real eigenvalues.
By the hypotheses on $\Z$ and the Gershgorin disc theorem~\cite[Theorem
  6.1.1]{HoJoMA}, every eigenvalue of $\Z$ is in the interval $(0, 2)$.  It
follows that $\Z$ is positive definite and that every eigenvalue of $\I -
\Z$ is in $(-1, 1)$.  Hence $\I - \Z$ is similar to a diagonal matrix with
entries in $(-1, 1)$, and so $\sum_{k = 0}^\infty (\I - \Z)^k$ converges to
$(\I - (\I - \Z))^{-1} = \Z^{-1}$.  Thus,
\begin{equation}
\label{eq:gersh-inv}
\Z^{-1}
=
\sum_{k = 0}^\infty (\I - \Z)^{k}
=
\sum_{k = 0}^\infty (\Z - \I)^{2k} (2\I - \Z).
\end{equation}
Writing $\e = (1\ \cdots\ 1)^\transp$, the unique weighting on $\Z$ is $\w
= \Z^{-1}\e$.  The hypotheses on $\Z$ imply that $\Z - \I$ has nonnegative
entries and $(2\I - \Z)\e$ has positive entries.  Hence
by~\eqref{eq:gersh-inv},
\[
\w = \Z^{-1}\e \geq (\Z - \I)^0 (2\I - \Z) \e = (2\I - \Z)\e
\]
entrywise, and so $\w$ is positive.
\end{proof}

Thus, a matrix $\Z$ that is ultrametric, or
satisfies~\eqref{eq:stronger-sim-hyps} and is strictly diagonally dominant,
has many special properties: the maximum diversity is equal to the
magnitude, there is a unique maximizing distribution, the maximizing
distribution has full support, and both the maximizing distribution and the
maximum diversity can be computed in polynomial time.

\section{Preservation of species}
\label{sec:elim}

We saw in Examples~\ref{eg:graph-lin3} and~\ref{eg:graph-lin4} that for
certain similarity matrices $\Z$, none of the maximizing distributions has
full support.  Mathematically, this simply means that maximizing
distributions sometimes lie on the boundary of $\Delta_n$.  But
ecologically, it may sound shocking: is it reasonable that diversity can be
increased by \emph{eliminating} some species?

We argue that it is.  Consider, for instance, a forest consisting of one
species of oak and ten species of pine, with each species equally abundant.
Suppose that an eleventh species of pine is added, again with equal
abundance (Figure~\ref{fig:pine}).  This makes the forest even more heavily
dominated by pine, so it is intuitively reasonable that the diversity
should decrease.  But now running time backwards, the conclusion is that if
we start with a forest containing the oak and all eleven pine species,
eliminating the eleventh should \emph{increase} diversity.

\begin{figure}
\begin{center}
\setlength{\unitlength}{0.7mm}
\begin{picture}(110,31)
\cell{55}{-1}{b}{\includegraphics[width=110\unitlength]{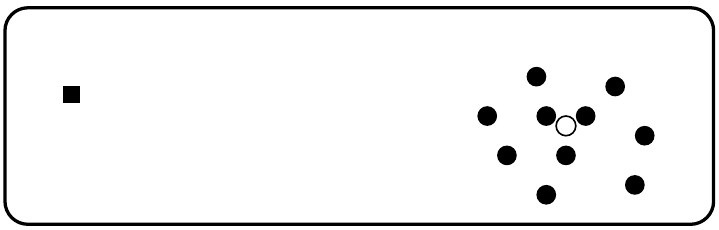}}
\cell{11.2}{24.5}{c}{oak}
\cell{87}{28.5}{c}{pine}
\end{picture}%
\end{center}
\caption{Hypothetical community consisting of one species of oak
  ($\scriptscriptstyle\blacksquare$) and ten species of pine ($\bullet$), to
  which one further species of pine is then added ($\circ$).  Distances
  between species indicate degrees of dissimilarity (not to scale).}
\label{fig:pine}
\end{figure}

To clarify further, recall from Section~\ref{sec:model} that diversity is
defined in terms of the \emph{relative} abundances only.  Thus, eliminating
species $i$ causes not only a decrease in $p_i$, but also an increase in
the other relative abundances $p_j$.  If the $i$th species is particularly
ordinary within the community (like the eleventh species of pine), then
eliminating it makes way for less ordinary species, resulting in a more
diverse community.

The instinct that maximizing diversity should not eliminate any species is
based on the assumption that the distinction between species is of high
value.  (After all, if two species were very nearly identical---or in the
extreme, actually identical---then losing one would be of little
importance.)  If one wishes to make that assumption, one must build it into
the model.  This is done by choosing a similarity matrix $\Z$ with a low
similarity coefficient $Z_{ij}$ for each $i \neq j$.  Thus, $\Z$ is close
to the identity matrix $\I$ (assuming that similarity is measured on a
scale of $0$ to $1$).  Example~\ref{eg:id-pos-def} guarantees that in this
case, there is a unique maximizing distribution and it does not, in fact,
eliminate any species.

(The fact that maximizing distributions can eliminate some species has
previously been discussed in the ecological literature in the case $q = 2$;
see Pavoine and Bonsall~\cite{PaBo} and references therein.)

We now derive necessary and sufficient conditions for a similarity matrix
$\Z$ to admit at least one maximizing distribution of full support, and
also necessary and sufficient conditions for \emph{every} maximizing
distribution to have full support.  The latter conditions are genuinely more
restrictive; for instance, if $\Z = \bigl( \begin{smallmatrix} 1& 1\\ 1&
  1\\ \end{smallmatrix} \bigr)$ then some but not all maximizing
distributions have full support.

\begin{lemma}
\label{lemma:elim-main}
If at least one maximizing distribution for $\Z$ has full support then $\Z$
is positive semidefinite and admits a positive weighting.  Moreover, if
every maximizing distribution for $\Z$ has full support then $\Z$ is
positive definite.
\end{lemma}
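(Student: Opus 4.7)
The first move is to unpack what it means for a maximizing distribution $\p$ to have full support. By Corollary~\ref{cor:max-invt}, $\p$ is invariant, and then Lemma~\ref{lemma:invt}\bref{part:invt-supp} gives a common value $K$ with $(\Z\p)_i = K$ for every $i \in \{1,\ldots,n\}$. Setting $\w = \p/K$ yields $\Z\w = \vecstyle{e}$, where $\vecstyle{e} = (1,\ldots,1)^\transp$, so $\w$ is a weighting on $\Z$, and the positivity of $\p$ transfers to positivity of $\w$. In particular $\mg{\Z} = \sum_i w_i = 1/K > 0$, which disposes of the positive-weighting half of the first assertion.

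The substantive step is positive semidefiniteness. Since $\p$ maximizes $\dvv{2}{\Z}$, it minimizes the quadratic form $\p \mapsto \p^\transp \Z \p$ over $\Delta_n$. For any direction $\vecstyle{y} \in \R^n$ with $\sum_i y_i = 0$, the fact that $\p$ lies in the interior of $\Delta_n$ gives $\p + t\vecstyle{y} \in \Delta_n$ for $|t|$ small, and the first-order contribution $2t \vecstyle{y}^\transp \Z \p = 2tK\sum_i y_i$ vanishes; local minimality therefore forces $\vecstyle{y}^\transp \Z \vecstyle{y} \geq 0$. The main obstacle is then extending this from tangent directions of the simplex to all of $\R^n$. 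The trick is the decomposition $\x = \alpha \w + \vecstyle{y}$ with $\alpha = (\sum_i x_i)/\mg{\Z}$ and $\vecstyle{y} = \x - \alpha \w$, so that $\sum_i y_i = 0$. Because $\Z\w = \vecstyle{e}$, the cross term $2\alpha\, \vecstyle{y}^\transp \Z \w = 2\alpha \sum_i y_i$ is zero, and
\[
\x^\transp \Z \x
= \alpha^2 \w^\transp \Z \w + \vecstyle{y}^\transp \Z \vecstyle{y}
= \alpha^2 \mg{\Z} + \vecstyle{y}^\transp \Z \vecstyle{y}
\geq 0,
\]
using $\w^\transp \Z \w = \w^\transp \vecstyle{e} = \mg{\Z} > 0$.

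For the moreover clause, suppose every maximizing distribution has full support, yet $\Z$ is not positive definite. Then some nonzero $\x$ satisfies $\x^\transp \Z \x = 0$; the decomposition above forces $\alpha = 0$ and $\vecstyle{y}^\transp \Z \vecstyle{y} = 0$, so $\sum_i x_i = 0$ and $\x$ has both positive and negative entries. For every real $t$ one then computes $(\p + t\x)^\transp \Z (\p + t\x) = \p^\transp \Z \p$, so $\p + t\x$ maximizes $\dvv{2}{\Z}$ whenever it remains in $\Delta_n$, and by Corollary~\ref{cor:irrelevance} it is in fact a maximizing distribution. Choosing $t^\ast = \min\{p_i/|x_i| \csuch x_i < 0\} > 0$ produces a maximizing distribution on the boundary of $\Delta_n$, contradicting the hypothesis. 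Throughout, the crux is that the defining property $\Z\w = \vecstyle{e}$ of the weighting produced in the first step is precisely what kills the cross terms obstructing the passage from tangent-cone information to a global statement about the quadratic form.
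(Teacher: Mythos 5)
Your proof is correct and follows essentially the same route as the paper's: extract the positive weighting from invariance, use second-order perturbations of the quadratic form $\p \mapsto \p^\transp \Z \p$ along directions summing to zero, kill the cross terms via $\Z\w = \vecstyle{e}$ to pass from the tangent hyperplane to all of $\R^n$, and for the `moreover' clause travel along a null direction to the boundary of $\Delta_n$ and invoke Corollary~\ref{cor:irrelevance}. The only differences are cosmetic (an explicit orthogonal decomposition $\x = \alpha\w + \vecstyle{y}$ in place of the paper's $f_\s(1)$ evaluation, and a contrapositive phrasing of the last step).
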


\begin{proof}
Fix a maximizing distribution $\p$ of full support.  Maximizing
distributions are invariant (Corollary~\ref{cor:max-invt}), so by
\bref{part:invt-invt}$\implies$\bref{part:invt-ext} of
Lemma~\ref{lemma:invt}, $\mg{\Z}\p$ is a weighting of $\Z$ and $\mg{\Z} >
0$.  In particular, $\Z$ has a positive weighting.

Now we imitate the proof of Proposition~3B of~\cite{FrTa}.  For each $\s \in
\R^n$ such that $\sum_{i = 1}^n s_i = 0$, define a function $f_\s: \R \to
\R$ by
\[
f_\s(t) = (\p + t\s)^\transp \Z (\p + t\s).
\]
Using the symmetry of $\Z$ and the fact that $\mg{\Z}\p$ is a weighting, we
obtain
\begin{align}
f_\s(t)  &
= 
\p^\transp \Z \p + 2\s^\transp \Z \p \cdot t + \s^\transp \Z \s \cdot t^2
\nonumber
\\
&
=
1/\mg{\Z} + \s^\transp \Z \s \cdot t^2.
\label{eq:fs-final}
\end{align}
Now $\sum s_i = 0$ and $\p$ has full support, so $\p + t\s \in \Delta_n$
for all real $t$ sufficiently close to zero.  But $f_\s(t) =
1/\dvv{2}{\Z}(\p + t\s)$ for such $t$, so $f_\s$ has a local minimum at
$0$.  Hence $\s^\transp \Z \s \geq 0$.  It follows that $f_\s$ is
everywhere positive.

We have shown that $\s^\transp \Z \s \geq 0$ whenever $\s \in \R^n$ with
$\sum s_i = 0$.  Now take $\x \in \R^n$ with $\sum x_i \neq 0$.
Put $\s = \x/\bigl(\sum x_i\bigr) - \p$.  Then $\sum s_i = 0$, and
\begin{equation}
\label{eq:not-perp-pos}
\x^\transp \Z \x 
=
\Bigl( \sum x_i \Bigr)^2 f_\s(1)
>
0.
\end{equation}
Hence $\Z$ is positive semidefinite.  

For `moreover', assume that every maximizing distribution for $\Z$ has full
support.  By~\eqref{eq:not-perp-pos}, we need only show that $\s^\transp \Z
\s > 0$ whenever $\s \neq \vecstyle{0}$ with $\sum s_i = 0$.  Given such an
$\s$, choose $t \in \R$ such that $\p + t\s$ lies on the boundary of
$\Delta_n$.  Then $\p + t\s$ does not have full support, so is not
maximizing, so does not maximize $\dvv{2}{\Z}$ (by
Corollary~\ref{cor:irrelevance}).  Hence $f_\s(t) > f_\s(0)$, which
by~\eqref{eq:fs-final} implies that $\s^\transp \Z \s > 0$.
\end{proof}

We can now prove the two main results of this section.

\begin{propn}
\label{propn:semi}
The following are equivalent:
\begin{enumerate}
\item
\label{part:semivec}
there exists a maximizing distribution for $\Z$ of full support;

\item
\label{part:semidef}
$\Z$ is positive semidefinite and admits a positive weighting.
\end{enumerate}
\end{propn}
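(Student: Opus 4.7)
The plan is to prove the two implications separately, with essentially all the work already done by earlier results.

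For \bref{part:semivec}$\implies$\bref{part:semidef}, I would simply invoke the first part of Lemma~\ref{lemma:elim-main}: by hypothesis there exists a maximizing distribution of full support, and the lemma then gives exactly the conclusion that $\Z$ is positive semidefinite and admits a positive weighting. So this direction requires no further argument.

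For the converse \bref{part:semidef}$\implies$\bref{part:semivec}, suppose $\Z$ is positive semidefinite and admits a positive weighting $\w$. Then $\w$ is in particular nonnegative, and by Lemma~\ref{lemma:pos-def-variational} we have $\mg{\Z} > 0$, so Proposition~\ref{propn:pos-def-max} applies and yields that $\pext{\w} = \w/\mg{\Z}$ is a maximizing distribution. Since every entry of $\w$ is strictly positive and $\mg{\Z} > 0$, every entry of $\pext{\w}$ is strictly positive, so $\pext{\w}$ has full support. This is the desired maximizing distribution.

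There is no real obstacle here: the equivalence is essentially a repackaging of Lemma~\ref{lemma:elim-main} together with Proposition~\ref{propn:pos-def-max}, which between them give matching necessary and sufficient conditions. The substantive content has already been absorbed by the two cited results (the delicate half being Lemma~\ref{lemma:elim-main}, which used the Fremlin--Talagrand-style perturbation argument around a maximizing distribution in the interior of $\Delta_n$).
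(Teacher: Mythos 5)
Your proposal is correct and follows essentially the same route as the paper: the forward implication is exactly the first part of Lemma~\ref{lemma:elim-main}, and for the converse the paper simply inlines the argument of Proposition~\ref{propn:pos-def-max} (via the computation theorem and Lemma~\ref{lemma:pos-def-sub}) rather than citing it as you do. The only cosmetic difference is that $\mg{\Z} > 0$ follows immediately from the positivity of the weighting itself, so your appeal to Lemma~\ref{lemma:pos-def-variational} there is harmless but unnecessary.
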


\begin{proof}
\bref{part:semivec}$\implies$\bref{part:semidef} is the first part of
Lemma~\ref{lemma:elim-main}.  For the converse, assume~\bref{part:semidef}
and choose a positive weighting $\w$.  Then $\mg{\Z} > 0$, so $\p =
\w/\mg{\Z}$ is a probability distribution of full support.  We have
$\dqz(\p) = \mg{\Z}$ for all $q$, by Lemma~\ref{lemma:invt}.  But the
computation theorem implies that $\Dmax{\Z} = \mg{\Z_B}$ for some $B \sub
\{1, \ldots, n\}$ such that $\Z_B$ admits a weighting, so $\Dmax{\Z} \leq
\mg{\Z}$ by Lemma~\ref{lemma:pos-def-sub}.  Hence $\p$ is maximizing.
\end{proof}

\begin{propn}
The following are equivalent:
\begin{enumerate}
\item
\label{part:vec}
every maximizing distribution for $\Z$ has full support;

\item
\label{part:unique}
$\Z$ has exactly one maximizing distribution, which has full support;

\item
\label{part:def}
$\Z$ is positive definite with positive weighting;

\item 
\label{part:crit}
$\Dmax{\Z} > \Dmax{\Z_B}$ for every nonempty proper subset $B$ of $\{1,
  \ldots, n\}$.
\end{enumerate}
\end{propn}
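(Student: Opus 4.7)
The plan is to prove the cycle (ii) $\Rightarrow$ (i) $\Rightarrow$ (iii) $\Rightarrow$ (ii), and then add the two implications (iii) $\Rightarrow$ (iv) and (iv) $\Rightarrow$ (i) to incorporate the fourth condition. All the heavy lifting has already been done in earlier lemmas and propositions, so the proof will essentially be a bookkeeping exercise assembling existing results.

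For the first cycle: (ii) $\Rightarrow$ (i) is trivial. For (i) $\Rightarrow$ (iii), I would invoke Lemma~\ref{lemma:elim-main}: its second assertion directly yields that $\Z$ is positive definite, while its first assertion (combined with the fact that Theorem~\ref{thm:main} provides at least one maximizing distribution, which under (i) must have full support) yields that $\Z$ has a positive weighting. For (iii) $\Rightarrow$ (ii), Proposition~\ref{propn:pos-def-max} states that when $\Z$ is positive definite with positive weighting $\w$, the distribution $\w/\mg{\Z}$ is the unique maximizing distribution, and its full support follows from $\w$ being positive.

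For (iii) $\Rightarrow$ (iv), fix a nonempty proper subset $B \subsetneq \{1, \ldots, n\}$. Applying the computation theorem (Theorem~\ref{thm:comp}) to the similarity matrix $\Z_B$ gives $\Dmax{\Z_B} = \max_C \mg{\Z_C}$, taken over $C \sub B$ such that $\Z_C$ admits a nonnegative weighting. Each such $C$ is a proper subset of $\{1, \ldots, n\}$, so Lemma~\ref{lemma:pos-def-sub} (using that the unique weighting on $\Z$ has full support) yields $\mg{\Z_C} < \mg{\Z}$. Combined with Proposition~\ref{propn:pos-def-max}, which gives $\Dmax{\Z} = \mg{\Z}$, this produces the strict inequality $\Dmax{\Z_B} < \Dmax{\Z}$.

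For (iv) $\Rightarrow$ (i), suppose $\p$ is a maximizing distribution and let $B = \supp(\p)$; being a probability distribution forces $B \neq \emptyset$. If $B$ were a proper subset, the absent species lemma (Lemma~\ref{lemma:abs}) applied to the restriction $\p'$ of $\p$ to $B$ would give $\Dmax{\Z} = \dqz(\p) = \dvv{q}{\Z_B}(\p') \leq \Dmax{\Z_B}$, contradicting (iv). Hence $B = \{1, \ldots, n\}$. There is no real obstacle in this proof; everything reduces to correctly citing the earlier results, and the only subtlety worth double-checking is the alignment of the "admits a weighting" hypothesis of Lemma~\ref{lemma:pos-def-sub} with the "admits a nonnegative weighting" quantifier in the computation theorem, which causes no trouble since the latter is strictly stronger.
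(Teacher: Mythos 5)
Your proposal is correct, and most of it coincides with the paper's proof: the implications (ii)$\Rightarrow$(i), (i)$\Rightarrow$(iii) via Lemma~\ref{lemma:elim-main}, (iii)$\Rightarrow$(ii) via Proposition~\ref{propn:pos-def-max}, and (iv)$\Rightarrow$(i) via the absent species lemma are exactly the paper's steps. The one place you genuinely diverge is how condition (iv) is reached. The paper proves (i)$\Rightarrow$(iv): given a nonempty proper $B$, it extends a maximizing distribution for $\Z_B$ by zero, observes that under (i) this extension is not maximizing and hence fails to maximize $\dqz$ for \emph{some} $q$, and concludes $\Dmax{\Z_B} < \Dmax{\Z}$ using only the absent species lemma---no positive-definiteness is needed for that leg. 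You instead prove (iii)$\Rightarrow$(iv), running the computation theorem on $\Z_B$ and applying the strict part of Lemma~\ref{lemma:pos-def-sub} (legitimate, since under (iii) every principal submatrix $\Z_C$ is positive definite and hence admits a weighting, and the unique weighting on $\Z$ has full support). Both routes are valid and close the same cycle of equivalences; the paper's is slightly more economical in that its derivation of (iv) uses no spectral input, whereas yours makes the quantitative mechanism behind the strict inequality---monotonicity of magnitude under passage to principal submatrices---more visible. Your closing remark about the mismatch between ``admits a weighting'' and ``admits a nonnegative weighting'' is resolved correctly: the latter is stronger, so the hypothesis of Lemma~\ref{lemma:pos-def-sub} is satisfied wherever you invoke it.
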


(The weak inequality $\Dmax{\Z} \geq \Dmax{\Z_B}$ holds for any $\Z$, by
the absent species lemma (Lemma~\ref{lemma:abs}).)

\begin{proof}
\bref{part:vec}$\implies$\bref{part:def} and
\bref{part:def}$\implies$\bref{part:unique} are immediate from
Lemma~\ref{lemma:elim-main} and Proposition~\ref{propn:pos-def-max}
respectively, while \bref{part:unique}$\implies$\bref{part:vec} is trivial.

For \bref{part:vec}$\implies$\bref{part:crit}, assume~\bref{part:vec}.  Let
$\emptyset \neq B \subsetneqq \{1, \ldots, n\}$.  Choose a maximizing
distribution $\p'$ for $\Z_B$, and denote by $\p$ its extension by zero to
$\{1, \ldots, n\}$.  Then $\p$ does not have full support, so
there is some $q \in [0, \infty]$ such that $\p$ fails to maximize $\dqz$.
Hence
\[
\Dmax{\Z_B} = \dvv{q}{\Z_B}(\p') = \dvv{q}{\Z}(\p) < \Dmax{\Z},
\]
where the second equality is by the absent species lemma.

For \bref{part:crit}$\implies$\bref{part:vec}, assume~\bref{part:crit}.
Let $\p$ be a maximizing distribution for $\Z$.  Write $B = \supp(\p)$, and
denote by $\p'$ the restriction of $\p$ to $B$.  Then for any $q$,
\[
\Dmax{\Z_B} \geq \dvv{q}{\Z_B}(\p') = \dvv{q}{\Z}(\p) = \Dmax{\Z},
\]
again by the absent species lemma.  Hence by~\bref{part:crit}, $B = \{1,
\ldots, n\}$.
\end{proof}

\section{Open questions}
\label{sec:questions}

The main theorem, the computation theorem and
Corollary~\ref{cor:irrelevance} answer all the principal questions about
maximizing the diversities~$\dqz$.  Nevertheless, certain questions remain.

First, there are computational questions.  We have found two classes of
matrix $\Z$ for which the maximum diversity and maximizing distributions
can be computed in polynomial time: ultrametric matrices
(Example~\ref{eg:ultra}) and those close to the identity matrix $\I$
(Example~\ref{eg:id-pos-def}).  Both are biologically significant.  Are
there other classes of similarity matrix for which the computation can be
performed in less than exponential time?

Second, we may seek results on maximization of $\dqz(\p)$ under constraints
on $\p$.  There are certainly some types of constraint under which both
parts of Theorem~\ref{thm:main} fail, for trivial reasons: if we choose two
distributions $\p$ and $\p'$ whose diversity profiles cross
(Figure~\ref{fig:main}(b)) and constrain our distribution to lie in the set
$\{\p, \p'\}$, then there is no distribution that maximizes $\dqz$ for all
$q$ simultaneously, and the maximum value of $\dqz$ also depends on $q$.
But are there other types of constraint under which the main theorem still
holds?

In particular, the distribution might be constrained to lie close to a
given distribution $\p$.  The question then becomes: if we start with a
distribution $\p$ and have the resources to change it by only a given small
amount, what should we do in order to maximize the diversity?

Third, we have seen that every symmetric matrix $\Z$
satisfying~\eqref{eq:weaker-sim-hyps} (for instance, every symmetric matrix
of positive reals) has attached to it a real number, the maximum diversity
$\Dmax{\Z}$.  What is the significance of this invariant?

We know that it is closely related to the magnitude of matrices.  This has
been most intensively studied in the context of metric spaces.  By
definition, the magnitude of a finite metric space $X$ is the magnitude of
the matrix $\Z = (e^{-d(i, j)})_{i, j \in X}$; see
\cite{MMS,AMSES,MeckPDM}, for instance.  In the metric context, the meaning
of magnitude becomes clearer after one extends the definition from finite
to compact spaces (which is done by approximating them by finite
subspaces).  Magnitude for compact metric spaces has recognizable geometric
content: for example, the magnitude of a 3-dimensional ball is a cubic
polynomial in its radius \cite[Theorem~2]{BaCa}, and the magnitude of a
homogeneous Riemannian manifold is closely related to its total scalar
curvature~\cite[Theorem~11]{WillMSS}.

Thus, it is natural to ask: can one extend Theorem~\ref{thm:main} to some
class of `infinite matrices' $\Z$?  (For instance, $\Z$ might be the form
$(x, y) \mapsto e^{-d(x, y)}$ arising from a compact metric space.  In this
case, the maximum diversity of order $2$ is a kind of capacity, analogous
to classical definitions in potential theory; for a compact subset of
$\R^n$, it coincides with the Bessel capacity of an appropriate
order~\cite{MeckMDC}.)  And if so, what is the geometric significance of
maximum diversity in that context?

There is already evidence that this is a fruitful line of enquiry.
In~\cite{MeckMDC}, 
Meckes gave a definition of the maximum diversity of order $2$ of a compact
metric space, and used it to prove a purely geometric theorem relating
magnitude to fractional dimensions of subsets of $\R^n$.  If this maximum
diversity can be shown to be equal to the maximum diversity of all other
orders then further geometric results may come within reach.

The fourth and final question concerns interpretation.  Throughout, we have
interpreted $\dqz(\p)$ in terms of ecological diversity.  However, there is
nothing intrinsically biological about any of our results.  For example, in
an information-theoretic context, the `species' might be the code symbols,
with two symbols seen as similar if one is easily mistaken for the other;
or if one wishes to transmit an image, the `species' might be the colours,
with two colours seen as similar if one is an acceptable substitute for the
other (much as in rate distortion theory~\cite{CoTh}).  Under these or
other interpretations, what is the significance of the theorem that
the diversities of all orders can be maximized simultaneously?

\paragraph{Acknowledgements}
We thank Christina Cobbold, Ciaran McCreesh and Richard Reeve for helpful
discussions.  This work was supported by the Carnegie Trust for the
Universities of Scotland, the Centre de Recerca Matem\`atica, an EPSRC
Advanced Research Fellowship, the National Institute for Mathematical and
Biological Synthesis, the National Science Foundation, and the Simons
Foundation.

\end{document}